\newtheorem{conjecture}{Conjecture}
\newtheorem{lemma}{Lemma}
\newtheorem{thm}{Theorem}
\theoremstyle{definition}
\newtheorem{eg}{Example}
\newcommand{\R}{\mathbb{R}}
\newcommand{\N}{{\sf N}}
\newcommand{\I}{X}
\newcommand{\J}{Y}
\newcommand{\QQ}{\mathbf{Q}}
\newcommand{\XX}{A}%{{\color{red}{A}}}
\newcommand{\YY}{B}%{{\color{red}{B}}}
\newcommand{\ANR}[2]{#1_{\downarrow #2}}
\newcommand{\ANRC}[2] {\overline{{#1}_{\downarrow #2}}}
\newcommand{\fnr}[1]{{f_{\ANR{}{#1}}}}
\newcommand{\mb}{\mathbf}
\newcommand{\wdef}{h}
\newcommand{\tOmega}{\tilde\Omega}
\newcommand{\ent}[1]{\mathsf{S}(#1)} % entanglement entropy
\newcommand{\mi}{{\sf I}}  % mutual information
\newcommand{\be}{\begin{equation}}
\newcommand{\ee}{\end{equation}}
\tikzstyle{startstop} = [rectangle, rounded corners, minimum width=3cm, minimum height=1cm,text centered, draw=black, fill=red!30]
\tikzstyle{process} = [rectangle, minimum width=3cm, minimum height=1cm, text centered, draw=black, fill=orange!30]
\tikzstyle{decision} = [diamond, minimum width=3cm, minimum height=1cm, text centered, draw=black, fill=green!30]
\definecolor{brg}{RGB}{70, 255,70}
\definecolor{darkgreen}{rgb}{0.1, 0.6, 0.2}
\preprint{BRX-TH-6726}
\title{\boldmath A new characterization of the holographic entropy cone}
\author[a]{Guglielmo Grimaldi,}
\author[a,b]{Matthew Headrick,} 
\author[c]{and Veronika E.\ Hubeny}
\affiliation[a]{Martin Fisher School of Physics, Brandeis University, Waltham MA 02453, USA}
\affiliation[b]{Institut des Hautes Etudes Scientifiques, 91440 Bures-sur-Yvette, France}
\affiliation[c]{Center for Quantum Mathematics and Physics (QMAP)\\ Department of Physics \& Astronomy, University of California, Davis CA 95616, USA}
\emailAdd{ggrimaldi@brandeis.edu}
\emailAdd{headrick@brandeis.edu}
\emailAdd{veronika@physics.ucdavis.edu}
\abstract{Entanglement entropies computed using the holographic Ryu-Takayanagi formula are known to obey an infinite set of linear inequalities, which define the so-called RT entropy cone. The general structure of this cone, or equivalently the set of all valid inequalities, is unknown. It is also unknown whether those same inequalities are also obeyed by entropies computed using the covariant Hubeny-Rangamani-Takayanagi formula, although significant evidence has accumulated that they are. Using Markov states, we develop a test of this conjecture in a heretofore unexplored regime. The test reduces to checking that a given inequality obeys a certain majorization property, which is easy to evaluate. We find that the RT inequalities pass this test and, surprisingly, \emph{only} such inequalities do so. Our results not only provide strong new evidence that the HRT and RT cones coincide, but also offer a completely new characterization of that cone.
\newline
\newline
A video abstract is available at \url{https://youtu.be/stYAA9BwJj8}.
}
\begin{document}

\maketitle

\section{Introduction}\label{sec:introduction}

Entanglement has played a central role in holography since the original holographic entanglement entropy proposal by Ryu-Takayanagi (RT) \cite{Ryu:06b20v, Ryu:2006ef} and its covariant generalization by Hubeny-Rangamani-Takayanagi (HRT) \cite{Hubeny:2007xt}. The study of the entanglement structure of \emph{geometric states}, i.e.\ holographic states whose dual description is in terms of a classical geometry, may help elucidate the emergence of the bulk spacetime from its dual CFT description. Constraints on the entanglement structure are realized in the form of linear inequalities on the CFT subsystem entropies. The set of all such inequalities for the entropies computed by the RT formula is formally captured by the so-called \emph{holographic entropy cone} \cite{Bao:2015bfa} --- better called the \emph{RT cone} --- and an impressive amount of work has been dedicated in understanding its elusive structure \cite{Marolf:2017shp,Hubeny:2018trv,Hubeny:2018ijt,Cui:2018dyq,HernandezCuenca:2019wgh,He:2019ttu,Hernandez-Cuenca:2019jpv,Bao:2020zgx,He:2020xuo,Walter:2020zvt,Bao:2020mqq,Akers:2021lms,Bao:2021gzu,Avis:2021xnz,Czech:2021rxe,Fadel:2021urx,Hernandez-Cuenca:2022pst,Li:2022jji,Czech:2022fzb,He:2022bmi,Hernandez-Cuenca:2023iqh,He:2023rox, Czech:2023xed,He:2023cco,He:2023aif,Czech:2024rco,Bao:2024obe,Bao:2024vmy,Bao:2024azn,Czech:2025tds, Czech:2025jnw,Bao:2025sjn}.

Despite much progress, the story is far from complete. Broadly speaking, there remain two important objectives:
\begin{enumerate}
    \item Understand the full structure of the set of RT inequalities, or equivalently the RT cone.
    \item Determine whether the HRT entropies obey the same inequalities, or equivalently if the HRT cone equals the RT cone.
\end{enumerate}
The first is crucial if we hope to utilize all of the information hidden in the holographic entropy cone to learn new things about the physics of holography. The second is important for two reasons. First, it is needed to confirm whether the lessons we draw from objective (1) apply to generic geometric states, not just static\footnote{By static we also allow for \emph{instantaneously static} states, i.e.\ states with a moment of time-reflection symmetry, with the boundary subregions also restricted to lie on the time-reflection symmetric slice.} ones where the RT formula applies. Second, when moving to the covariant setting, both the existence and the consistency of the HRT formula require dynamical input from the gravitational theory. Properties of holographic entanglement entropy (including, but not limited to, the inequalities) become general relativity theorems. Energy conditions, equations of motion, and asymptotic boundary conditions all play crucial roles. This is a hint that the covariant holographic entropy cone  knows about valuable dynamical properties of the holographic map, so meeting objective (2) is of great importance. Strong evidence, coming from several different directions, has already accumulated that indeed the two cones are equal \cite{Bao:2018wwd, Erdmenger:2017gdk,Caginalp:2019mgu, Czech:2019lps,Grado-White:2024gtx, Bousso:2024ysg} (see \cite{Grado-White:2025jci} for a summary).

In this paper, we make a fresh attack on the question of whether the HRT entropies obey the same inequalities as the RT ones, and along the way discover a completely new way to characterize those inequalities. Our strategy will be to look at non-static configurations that saturate a given RT inequality, and then attempt to perturb the state in order to violate the inequality. When the saturation occurs because the same HRT surfaces appear on the two sides of the inequality, it will be robust under perturbations. Instead, we seek configurations where the HRT surfaces are different, but happen to have the same total area, so that a perturbation of the bulk geometry has a chance of leading to a violation. We find this phenomenon in so-called Markov-state configurations \cite{Casini:2017roe}, where the CFT is in the vacuum in Minkowski space and all the regions lie on a common light cone. This will lead us to the notion of a \emph{null reduction}, which involves deleting every term in an inequality where a particular party does not appear. For example, for the MMI inequality,
\be
\mathsf{S}(AB)+\mathsf{S}(AC)+\mathsf{S}(BC)\ge \mathsf{S}(A)+\mathsf{S}(B)+\mathsf{S}(C)+\mathsf{S}(ABC)\,,
\ee
null reducing on $A$ yields the strong subadditivity inequality
\be\label{SSA0}
\mathsf{S}(AB)+\mathsf{S}(AC)\ge \mathsf{S}(A)+\mathsf{S}(ABC)\,.
\ee
Null reductions are studied in section \ref{sec:sats-and-null}, where it is observed that null reducing any RT inequality yields an RT inequality.

In section \ref{sec:bulk-perturbations}, we use the Einstein equation and null energy condition to show that robustness of a given inequality under a certain class of perturbations of the bulk metric in a light-cone configuration requires its null reductions to obey a certain property. Specifically, with the null-reduced inequality written so that all terms on both sides have coefficient 1, we turn the LHS into a list of numbers, one for each term, where $A$ is replaced by $a$, $AB$ by $a+b$, etc., and similarly for the RHS; we then ask whether, for every set of positive values of $a,b,\ldots$, the LHS list is majorized\footnote{
The definition and basic properties of majorization are briefly reviewed in appendix \ref{sec:review-majorization}. Intuitively one may think of majorization akin to mixing: The vector $\vec x$ is majorized by $\vec y$, written $\vec x\prec\vec y$, if $\vec x$ lies in the convex hull of all permutations of $\vec y$.} by the RHS list. For example, for \eqref{SSA0}, we have $(a+b,a+c)$ for the LHS and $(a,a+b+c)$ for the RHS; indeed, for any positive $a,b,c$, the former pair is majorized by the latter pair (since it is less dispersed). We call this the ``majorization test''. The inequality is valid for any light-cone configuration (LCC), within the given class of states, if and only if it passes the majorization test for all null reductions; we then say it is \emph{LCC-safe}. Indeed, MMI is LCC-safe, as expected since we know it is valid for HRT  \cite{Wall_2014}.

In section \ref{sec:results}, we find that the LCC-safe property has several remarkable features. First, it seems to be obeyed by all RT inequalities. We tested all known inequalities up to 6 regions \cite{Hernandez-Cuenca:2023iqh}, as well as members of the two known infinite families \cite{Czech:2023xed} up to 13 regions. The test thereby provides very strong evidence, in a completely new regime, that the HRT cone equals the RT cone. 
Notice that even though the LCC class of configurations is rather special in the collection of all possible configurations and states, it is quite general in the non-trivial saturation context; in other bulk geometries which are not pure AdS (or quotients thereof like BTZ), we would not encounter this potential violation danger zone.
In the absence of some other kind of saturating configuration that is not trivially robust under perturbations, the only remaining possibility for a counterexample would be a violation deep inside the cone --- some kind of island of violation. We suspect this scenario to be unlikely.

Second, the converse statement appears to hold as well: every LCC-safe inequality is an RT inequality.\footnote{In the Note Added at the end of Introduction we provide a brief update rendering this statement obsolete.  However, in order to maintain historical accuracy, we have chosen to not propagate these ``living updates" through the paper.
} This was a completely unexpected result, from our viewpoint. It says that any inequality that can be violated in a static state can also be violated in a light-cone configuration. We do not know why this should be the case. It implies that LCC-safety amounts to a completely new characterization of the RT cone. We also note that, as a practical matter, LCC-safety is extremely fast to check for any given inequality, much faster than existing methods for finding entropy inequalities.

This work thus addresses both objectives (1) and (2). It also opens several directions for future study. First, it would be valuable to prove that an inequality is LCC-safe if and only if it is an RT inequality. Because it is so fast to check, this property can be used to speed the discovery of new inequalities. We can also ask what the new perspective tells us about the physical content and implications of the holographic inequalities. In other words, what is special about the entanglement structure of geometric states in quantum gravity?
Although our reformulation in terms of majorization may at a first glance seem like a strange mathematical property, its origins harking back to light-cone configurations suggest the hope of a more operational interpretation, and open an intriguing relation to bulk causality.
Finally, and most ambitiously, we can hope that validity of inequalities in the light-cone regime can be promoted to validity in general time-dependent states and configurations, thereby proving that the HRT and RT cones are equal.

First, we begin in section \ref{sec:review} with a brief review of the necessary background material concerning RT inequalities, which we will also use as an opportunity to set our notation and terminology.

\paragraph{Note added:} Since the appearance of this paper, a follow-up work by the same three authors and an additional author has appeared \cite{Grimaldi:2026lbq}, in which Conjectures \ref{conj:sHIQmaj} and \ref{conj:QisTrue} are proven, while Conjectures \ref{conj:majsHIQ} and \ref{conj:allNR} are disproven by explicit counterexamples. That work establishes further combinatorial properties of null reductions. See also \cite{Czech:2026zca} for an independent proof of Conjectures \ref{conj:sHIQmaj} and \ref{conj:QisTrue}.

\section{Review of the RT cone}
\label{sec:review}

Let us begin by discussing entropy cones in quantum mechanics. Fix a state $\rho$ of a quantum system on $\mathsf{N}$ parties\footnote{In quantum mechanics, these can be individual components of an interacting system. In quantum field theory, they are spatial subregions on some Cauchy slice.} $\{A_1,A_2,\dots, A_{\mathsf{N}}\}$, with the Hilbert space factorizing $\mathscr{H} = \mathscr{H}_{1} \otimes \mathscr{H}_2 \otimes \cdots \otimes \mathscr{H}_\mathsf{N}$. The entropies of all possible combinations of the parties form a real vector with $2^{\mathsf{N}}-1$ components, i.e. 
\begin{equation}
    \vec{\mathsf{S}} = \left( \mathsf{S}(A_1),\, \mathsf{S}(A_2),\, \dots,\,\mathsf{S}(A_1A_2),\, \mathsf{S}(A_1A_3),\, \dots,\, \mathsf{S}(A_1A_2\dots A_{\mathsf{N}})\right) \in \mathbb{R}^{2^\mathsf{N}-1}\,.
\end{equation}
This vector lives in \emph{entropy space} and the set of all physically realizable entropy vectors for all possible density matrices forms a convex cone in entropy space known as the \emph{quantum entropy cone}. While it is clear from its construction that the cone is symmetric under permutation of the $\mathsf{N}$ parties, the cone actually enjoys a larger \emph{purification symmetry} under permutations of $\mathsf{N} +1$ parties, where the $(\mathsf{N}+1)$st party, the \emph{purifier}, is defined as the complement subsystem $(A_{1}A_2\dots A_\mathsf{N})^c$ which purifies the state.\footnote{Recall that for a pure state $\mathsf{S}(A_{1} A_2 \dots A_\mathsf{N+1}) = 0$. Then for any subsystem $\I$ its entropy equals the one of its complement $\I^c$, 
$\mathsf{S}(\I) = \mathsf{S}(\I^c)$. If the purifier is treated democratically as any other party, then the larger permutation symmetry is manifest.}

An \emph{entropy cone} can be viewed as the intersection of half-spaces defined by linear inequalities on the entropies. Each facet of the cone is contained in (and spans) a hyperplane associated with saturation of one of these inequalities. The full collection of inequalities characterizes the system under consideration; while the \emph{quantum} entropy cone characterizes the set of all physical states of any quantum system, the more restricted \emph{holographic} entropy cone pertains to geometric states in holography, or more specifically CFT states admitting a dual description in terms of a classical geometry for which the entropies of spatial regions are computed by the RT formula.\footnote{
    One may also apply this framework to abstract constructs not restricted to physical systems; for example the subadditivity cone specified simply by non-negativity of mutual information has played a key role in several recent explorations of the holographic entropy cone 
    \cite{Hernandez-Cuenca:2022pst,He:2022bmi,He:2023cco,He:2023aif,He:2024xzq,Hubeny:2024fjn}.
}  To emphasize that most of these inequalities have been proved only in the static context, in the present work we refer to the holographic entropy cone as the \emph{RT cone}.

The RT cone is known fully up to $\mathsf{N}=5$ \cite{HernandezCuenca:2019wgh}, whereas the quantum entropy cone is known only up to $\mathsf{N} = 3$.  
At $\mathsf{N} = 3$ the quantum entropy cone is delimited by subaditivity (SA), or equivalently non-negativity of mutual information (MI),
\begin{equation}\label{eq:SA}
    \mi(A:B) := \mathsf{S}(A) + \mathsf{S}(B) - \mathsf{S}(AB) \geq 0,
\end{equation}
and strong subaditivity (SSA), or equivalently non-negativity of conditional mutual information (CMI),
\begin{equation}\label{eq:SSA}
    \mi(A:C|B) :=  
    \mi(AB:C) - \mi(B:C) = 
    \mathsf{S}(AB) + \mathsf{S}(BC) - \mathsf{S}(B)  - \mathsf{S}(ABC) \geq 0,
\end{equation}
and their permutations in the $\mathsf{N} +1$ parties. 
The holographic entropy cone is likewise specified by SA \eqref{eq:SA} at $\mathsf{N} =2$, but becomes more restricted for larger $\mathsf{N} $.  Already at $\mathsf{N} =3$, we have a new inequality, called the  monogamy of mutual information (MMI) \cite{Hayden:2011ag}, which can be expressed as the non-positivity of tripartite information,
\begin{equation}\label{eq:MMI}
    -\mi_3(A:B:C)    :=  \ent{AB} + \ent{BC}+ \ent{AC} - \ent{A}- \ent{B}- \ent{C}  - \ent{ABC}\geq 0\,.
\end{equation}
Note that MMI is not a true inequality in quantum mechanics, since it can be violated by some quantum states such as the 4-party GHZ state. In holography, it renders SSA redundant, and the RT cone for $\mathsf{N} = 3,4$ is in fact fully determined by SA and MMI. 

We will call an inequality that holds for RT entropies a \emph{static holographic entropy inequality} (sHEI).\footnote{In much of the literature on the holographic entropy cone, such inequalities are simply called holographic entropy inequalities (HEIs). Here we include the prefix \emph{s} in order to emphasize that these inequalities have so far only been proved for RT.} A special role is played by the so-called \emph{primitive} sHEIs, which cannot be written as conical combinations of other sHEIs. Primitive sHEIs determine facets of the holographic entropy cone. For $\mathsf{N}\le5$, the full set of primitive sHEIs, and therefore the entire holographic entropy cone, is known \cite{Bao:2015bfa, HernandezCuenca:2019wgh}. For $\mathsf{N} = 6$, a large set of primitive sHEIs is known, but it is likely quite incomplete \cite{Hernandez-Cuenca:2023iqh}. 
In addition, two infinite families of primitive sHEIs for arbitrarily large $\mathsf{N}$ have been constructed \cite{Czech:2024rco}.

Currently, the only method available to determine whether a given inequality is an sHEI is by showing the existence of a so-called contraction map \cite{Bao:2015bfa}. Consider an inequality of the form LHS $\ge$ RHS, where LHS and RHS are conical combinations of the entropies:
\be\label{LHSRHS}
\text{LHS}=\sum_{l=1}^Lc_l \, \mathsf{S}(\I_l)\,,\qquad
\text{RHS}=\sum_{r=1}^Rd_r \, \mathsf{S}(\J_r)\,,
\ee
where $\I_l$, $\J_r$ are composite parties, and $c_l, d_r$ are strictly positive. We encode these terms into bit strings, known as occurrence vectors; for each region $A_i$, we define the bit string $\vec x_i\in \{0,1\}^{L}$ by $(x_i)_l =1$ if $A_i \subseteq \I_l$ and $(x_i)_l =0$ otherwise, and similarly for $\vec y_i\in \{0,1\}^{R}$. The purifier region is therefore assigned the bit strings $\vec x_{\mathsf{N}+1} =\vec{0}$ and $\vec y_{\mathsf{N}+1} = \vec{0}$. A \emph{contraction map} is a map $f: \{0,1\}^{L} \to \{0,1\}^R$ such that, for all $i = 1, \ldots,\mathsf{N}+1$,
\begin{equation}
\label{eq:contrOV}
f(\vec x_i) = \vec y_i\,,
\end{equation}
and for all pairs of bit strings $\vec x,\vec x' \in \{0,1\}^L$,
\begin{equation}
\label{eq:contrcond}
\norm{f(\vec x) - f(\vec x')}_{\text{rhs}} \leq \norm{\vec x-\vec x'}_{\text{lhs}}\,,
\end{equation}
where $\norm{\cdot}_{\rm lhs}$ is the Hamming norm with weights defined by the coefficients $c_l$, and similarly for $\norm{\cdot}_{\rm rhs}$ with weights $d_r$. It can be shown \cite{Bao:2015bfa} that, if a contraction map exists, then the inequality is an sHEI. The contraction map proof method can be understood as a formal combinatorial rewriting of the original geometric inclusion-exclusion argument \cite{Headrick:2007km}, where the map $f$ is a bookkeeping device that defines the correct inclusions and exclusions to perform to prove the inequality. It has also been argued that the converse holds: every sHEI is provable by a contraction map \cite{Bao:2025sjn}.

\subsection{Information quantities}

As we saw above with the MI, CMI, and negative tripartite information, it is convenient to introduce the notion of an \emph{information quantity}, namely a linear combination of entropies:
\begin{equation}
    \mathbf{Q}(\vec{\mathsf{S}}) = \sum_\I Q_\I \, \mathsf{S}(\I)\,,
\end{equation}
where $\I$ runs over all $2^{\mathsf{N}}-1$ non-empty subsystems. Any linear entropy inequality can be written in the form
\be\label{eq:sHIQsHEI}
\mathbf{Q}(\vec{\mathsf{S}}) \geq 0
\ee
for some information quantity $\QQ$. As in \eqref{LHSRHS}, the inequality \eqref{eq:sHIQsHEI} can also be written
\be
\text{LHS}\ge\text{RHS}\,,
\ee
where by definition LHS and RHS have positive coefficients. Whenever we refer to LHS and RHS of a given information quantity, we will always mean them in this sense.

A \emph{static holographic information quantity} (sHIQ) is one for which \eqref{eq:sHIQsHEI} is a sHEI, in other words, for which \eqref{eq:sHIQsHEI} holds for any static geometric state and any specification of $\mathsf{N}$ spatial regions. Any conical combination of sHIQs is an sHIQ, so the sHIQs form a cone in the dual space to the entropy space; specifically, the dual cone to the RT cone. A \emph{primitive} sHIQ, corresponding to a primitive sHEI, is one that cannot be written as a conical combination of other sHIQs; such an sHIQ is an extreme ray of the dual cone. All known primitive sHIQs can be written with integer coefficients 
\cite{Bao:2015bfa,HernandezCuenca:2019wgh,Hernandez-Cuenca:2023iqh,Czech:2024rco}.

Two properties enjoyed by most primitive sHIQs, that will play an important role in our work, are balance and superbalance. An information quantity $\QQ$ is \emph{balanced} if, for every party $A_i$,
\begin{equation}
    \sum_{\I \supseteq A_i} Q_\I = 0\,.
\end{equation}
Balance implies that, when evaluated on a configuration of non-adjoining regions, $\QQ(\vec{\mathsf{S})}$ is finite and scheme-independent, as the UV divergences associated with the boundaries cancel. The MI, CMI, and tripartite information are all balanced. An example of a non-balanced sHIQ is the Araki-Lieb quantity,
\begin{equation}\label{ALIQ}
    \mathbf{Q}_{\text{AL}}    := \ent{A} + \ent{AB} - \ent{B}\,.
\end{equation}
In fact, all primitive sHIQs are balanced except instances of Araki-Lieb.

One way to get an sHIQ from another one (on the same set of parties) is to permute the parties $A_1,\ldots,A_{\N}$. This also obviously preserves the balance condition. More generally, one can permute those parties \emph{along with the purifier} $O$. After doing so, one should use purity to rewrite the quantity using just $A_1,\ldots,A_{\N}$. For example, for $\N=2$, starting from the MI $\ent{A}+\ent{B}-\ent{AB}$, one can exchange $B$ with $O$ to obtain $\ent{A}+\ent{O}-\ent{AO}$, then rewrite the last two terms in terms of $A,B$; the result is precisely the Araki-Lieb quantity \eqref{ALIQ}. As we see from that example, such a permutation does not necessarily preserve the balance condition. We call the set of all information quantities obtained from a given one by permutations, including those involving the purifier, its \emph{orbit}.

We say that $\QQ$ is \emph{superbalanced} if all members of its orbit are balanced. This is equivalent to being balanced and having vanishing total coefficient for any \emph{pair} of parties $A_iA_j$:
\be
\sum_{X\supseteq A_iA_j}Q_X=0\,.
\ee
It is also equivalent to the quantity vanishing for any state whose purification consists entirely of Bell pairs. The tripartite information is superbalanced, but the MI and CMI are not. It turns out that all primitive sHIQs are superbalanced except instances of MI 
\cite{He:2020xuo}.
 
Properties like superbalance are very helpful in restricting the possible forms that sHIQs can take. A particularly interesting one is the so-called \emph{tripartite form} \cite{Hernandez-Cuenca:2023iqh}. This form automatically incorporates several conditions such as superbalance \cite{He:2020xuo}, and there is evidence that any primitive sHIQ (except instances of MI) can be cast in this form \cite{Hernandez-Cuenca:2023iqh,HubenyLiuWIP}. The tripartite form comprises conditional tripartite informations with a definite sign: 
\begin{equation}\label{eq:tripartite-form}
    \mathbf{Q} = \sum_i - \mi_3(X_i:Y_i:Z_i|W_i),
\end{equation}
where $X_i, Y_i, Z_i$, and $W_i$ are any disjoint subsystems (individual or composite) and where the conditional tripartite information is defined analogously to the CMI:
\begin{equation}\label{eq:cond-tripartiteI}
    \mi_3(X:Y:Z|W) := \mi_3(WX:Y:Z) - \mi_3(W:Y:Z)\,.
\end{equation}
Here $W_i$ can be the empty set, in which case $\mi_3(X_i:Y_i:Z_i|\emptyset)$ reduces to the tripartite information $\mi_3(X_i:Y_i:Z_i)$. The rigidity of the tripartite form \eqref{eq:tripartite-form} makes it a powerful tool for generating new inequalities, as demonstrated in \cite{Hernandez-Cuenca:2023iqh}, and we will use it in \cref{sec:TFproperties} to analyze properties of null-reduced sHIQs.

\section{Saturation of inequalities and null reductions}\label{sec:sats-and-null}

We seek to test whether the HRT formula obeys the same set of inequalities as the RT formula. Our strategy is to construct non-static configurations that saturate a given sHEI, and then perturb the bulk geometry slightly to try to violate the inequality. (In entropy space, saturating an inequality means the entropy vector is sitting on a facet of the holographic entropy cone. A violation would correspond to the vector being pushed out of the cone.) Writing the sHEI in the form $\text{LHS}\ge\text{RHS}$, there are two natural ways a configuration of HRT surfaces can saturate the inequality:
\begin{enumerate} 
\item The HRT surfaces that calculate the entropies on the LHS are the same as the surfaces that calculate those on the RHS. This kind of saturation is robust under perturbations of the bulk. Under a small perturbation of the metric, the same surfaces will continue to calculate the two sides, so the inequality will continue to be saturated.\footnote{An exception may occur if the configuration happens to be sitting on a phase transition for one of the sides. It would be interesting to explore whether this can be a source of tests for the covariant validity of RT inequalities.} Such a configuration therefore does not pose a threat to the validity of the inequality.

\item Different HRT surfaces appear in LHS and RHS, but they happen to have the same total area. Then, we can try to change the metric in order to increase the area of the surfaces in RHS without changing those in LHS, leading to a violation of the inequality. Naively it seems we will always be able to make a counterexample in this way. However, the perturbation may violate the null energy condition (NEC) and therefore be illegal in the classical limit where we are working. Indeed, the examples we will study have the feature that they saturate NEC, so some small perturbations of the metric do violate it. The question then is whether there exist NEC-preserving perturbations that lead to a counterexample to the given inequality. This seems to be an excellent place to test for violations, and we will focus on this second class of configurations. 
\end{enumerate}

\subsection{Light-cone configurations}
\label{sec:light-cone}

We wish to construct configurations of the second type. For two different sets of HRT surfaces to have the same total area requires a very special geometry. The only situations we know of where this occurs systematically are those in which the field theory is in Minkowski space (or a conformally flat spacetime), the state is the vacuum, and the regions lie on a common light cone.

\begin{figure}
    \centering
    \includegraphics[width=0.8\linewidth]{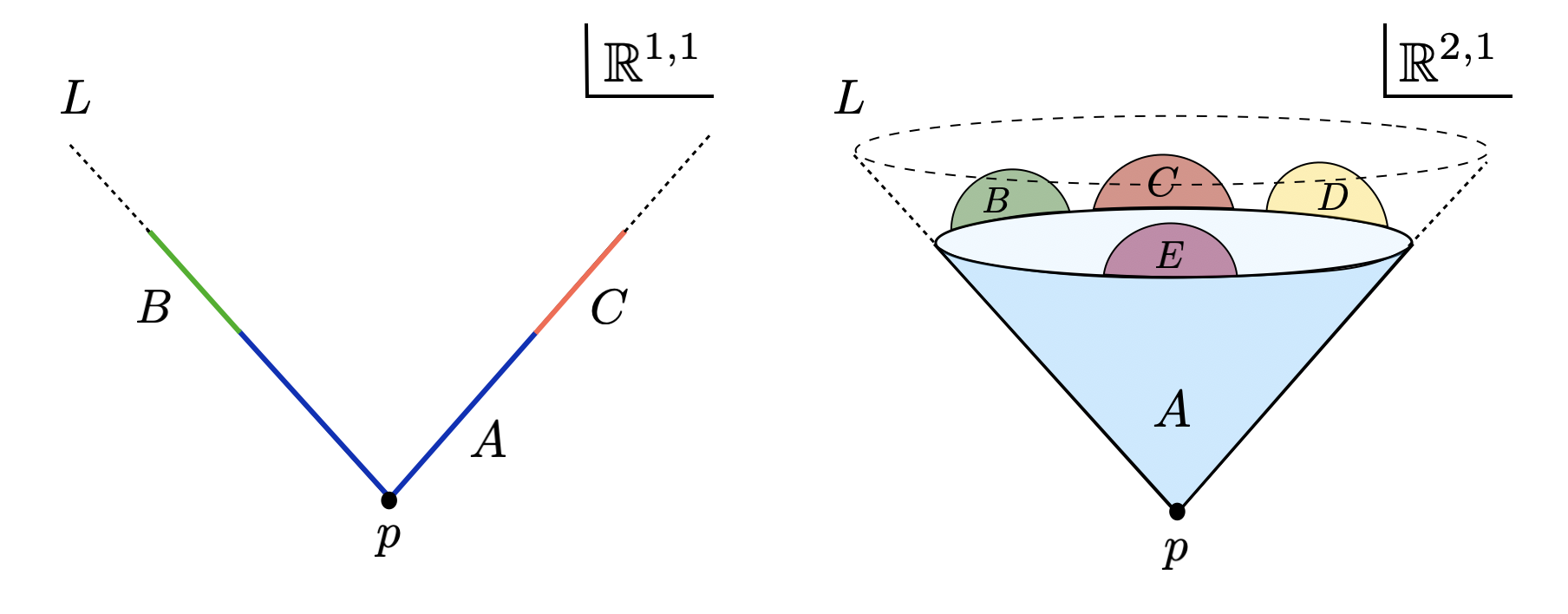}
    \caption{\textbf{Left}: LCC with central region $A$ for three boundary regions $A,B$ and $C$ in 1+1 dimensional Minkowski space. In any CFT, this configuration saturates SSA, $\mi(B:C|A)=0$, implying that the state reduced on these three regions is a Markov state. \textbf{Right}: LCC in 2+1 dimensional Minkowski space for five boundary regions.}
    
    \label{fig:light-cone-config}
\end{figure}

Let $\mathbf{Q}$ be a superbalanced information quantity on $\mathsf{N}$ parties. We are \emph{not} assuming at this point that $\QQ$ is an sHIQ. We arrange all of the regions appearing in $\mathbf{Q}$ to lie on the future light-cone $L$ of a point $p$ in $d$-dimensional Minkowski space, $L:=\partial I_{\rm bdy}^+(p)$, with one of the regions, say $\XX$, containing $p$, and the others adjoining $\XX$. The arrangement must be such that each null ray of $L$ intersects $\XX$ on a single finite interval, and intersects at most one other region, on a single finite interval touching $\XX$; thus, following the ray from $p$, it starts in $\XX$, leaves $\XX$, possibly immediately enters another region, leaves that region, and does not enter any regions again. Thus, the null rays of $L$ can be partitioned according to which region other than $\XX$, if any, they intersect. We call such a configuration of regions a \emph{light-cone configuration (LCC)}, and the region containing $p$ the \emph{central region}. The simplest example of an LCC can be constructed with three regions in $d=2$; see figure \ref{fig:light-cone-config}, left side. With more than three regions, we must have $d\ge3$ to construct an LCC; see figure \ref{fig:light-cone-config}, right side.

\begin{figure}
    \centering
    \includegraphics[width=0.35\linewidth]{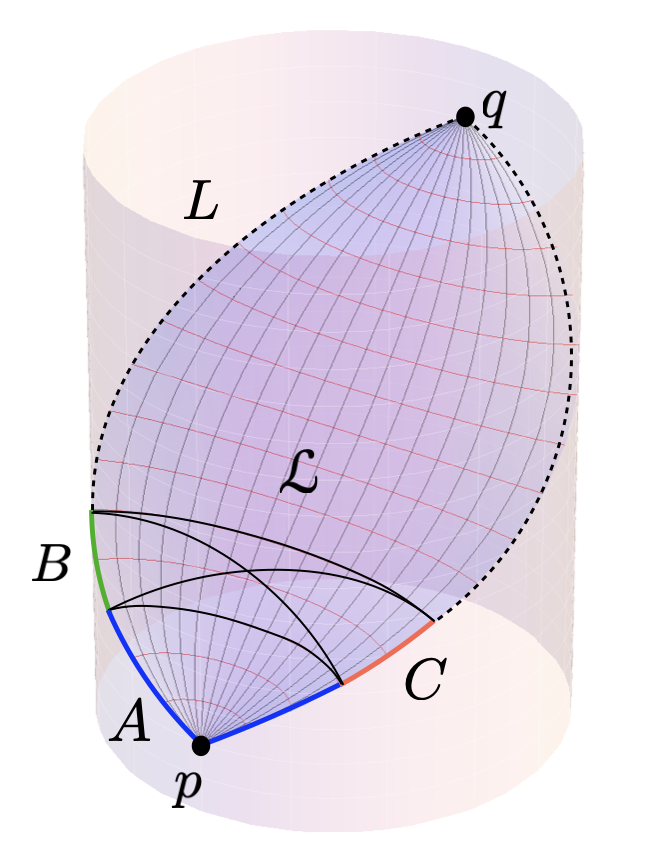}
    \caption{LCC with three regions $A$, $B$ and $C$ as in the left figure \ref{fig:light-cone-config}, on $\R\times S^1$, with the light-cone $L$ (dashed) extended until it closes at the point $q$. The interior of the cylinder is global AdS$_3$, and the future light cone in the bulk is $\mathcal{L}$, with its null generators likewise converging at $q$. As an example we consider MMI: $\ent{AB} + \ent{AB}+ \ent{AC}  \geq \ent{A} + \ent{B} + \ent{C} + \ent{ABC}$. Since $A$ is the region containing the tip of the light-cone, studying MMI in this configuration is equivalent to studying its null reduction on $A$, which is SSA: $\ent{AB} + \ent{AC} \geq \ent{A} + \ent{ABC}$. We draw in black the four RT surfaces contributing for each term;     the configuration saturates the inequality since every bulk null generator (thin blue lines) intersects every RT surface exactly once.
    }
    \label{fig:light-cone-surfaces}
\end{figure}

It will be convenient for us to embed our Minkowski space conformally in $\R\times S^{d-1}$. Then $L$ can be continued until it closes at the point $q$ that is antipodal to $p$ on the $S^{d-1}$ and at a time $\pi R$ later, with $R$ the radius of the $S^{d-1}$; see figure \ref{fig:light-cone-surfaces} (whose bulk part is described at the end of the next subsection). The full light cone $L$ forms a Cauchy slice for the boundary spacetime, with the purifying region $O$ occupying the complement of the regions $A,B,\ldots$ and containing $q$.

\subsection{Saturation of inequalities}
\label{sec:saturation}

The entropy vector $\vec{\mathsf{S}}$ depends on both the configuration of regions and the state. Since in this paper we will consider only LCCs, we will not indicate the configuration explicitly, but will write $\vec{\mathsf{S}}_\psi$ to indicate the dependence on the state $\psi$, with $\vec{\mathsf{S}}_{\rm vac}$ representing the entropy vector in the vacuum. In this subsection, we will show that $\QQ(\vec{\mathsf{S}}_{\rm vac})=0$ --- again, assuming only that $\QQ$ is superbalanced.

When evaluating $\QQ$ on an LCC, there is a fundamental difference between the terms that include the central region (say $\XX$) and those that don't. Namely, for a composite region containing $\XX$, and therefore containing $p$, the state has the form of a Markov state, which means that the modular Hamiltonian takes a particularly simple form \cite{Casini:2017roe}; on the other hand, for a region not containing $\XX$, the state is degenerate. We will be more specific about both cases below. Because of this distinction, we partition $\mathbf{Q}$ accordingly,
\be\label{Qdecomp}
\QQ=\ANR{\mathbf{Q}}{\XX}+\ANRC{\QQ}{\XX}\,,
\ee
where $\ANR{\QQ}{\XX}$ includes all the terms that contain $\XX$ and $\ANRC{\QQ}{\XX}$ includes the rest. The fact that $\QQ$ is balanced means that the sum of the coefficients in $\ANR{\QQ}{\XX}$ vanishes. We also have:

\begin{lemma}\label{thm:balance}
Both $\ANR{\QQ}{\XX}$ and $\ANRC{\QQ}{\XX}$ are balanced.
\end{lemma}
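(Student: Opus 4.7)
The plan is to unpack the balance condition directly from the definitions, using the hypothesis that $\QQ$ is superbalanced. This reduces to two straightforward case analyses, one for each of the pieces $\ANR{\QQ}{\XX}$ and $\ANRC{\QQ}{\XX}$.

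First I would fix an arbitrary party $A_i$ and write out the sum defining balance. For $\ANR{\QQ}{\XX}$ I need to show
\be
\sum_{\I \ni A_i,\, \I \ni \XX} Q_\I = 0\,.
\ee
If $A_i = \XX$, this is the same sum as $\sum_{\I \ni \XX} Q_\I$, which vanishes by balance of $\QQ$ applied to the party $\XX$. If $A_i \neq \XX$, then the sum is over composite regions containing the pair $A_i \XX$, and it vanishes by superbalance applied to that pair (which is the second equivalent form of superbalance recalled in the paragraph preceding the lemma).

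For $\ANRC{\QQ}{\XX}$ I would argue by subtraction. For $A_i = \XX$ the sum is empty and the claim is trivial. For $A_i \neq \XX$ I write
\be
\sum_{\I \ni A_i,\, \I \not\ni \XX} Q_\I = \sum_{\I \ni A_i} Q_\I - \sum_{\I \ni A_i,\, \I \ni \XX} Q_\I\,,
\ee
where the first term on the right vanishes by balance of $\QQ$ at $A_i$, and the second vanishes by the case just established (equivalently, by superbalance on the pair $A_i \XX$).

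There is no real obstacle here; the lemma is essentially a bookkeeping consequence of the definitions of balance and superbalance. The only point worth flagging is that the proof uses precisely both conditions -- balance alone would not suffice to split $\QQ$ into two separately balanced pieces, which is why the superbalance hypothesis on $\QQ$ is assumed in the setup of this section.
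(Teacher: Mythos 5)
Your proof is correct and follows essentially the same route as the paper: superbalance applied to the pair $A_i\XX$ gives balance of $\ANR{\QQ}{\XX}$ at each $A_i\neq\XX$, and subtracting from the balance condition of $\QQ$ at $A_i$ gives balance of $\ANRC{\QQ}{\XX}$. The only (minor) difference is that you explicitly treat the case $A_i=\XX$, which the paper disposes of in the sentence immediately preceding the lemma.
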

\begin{proof}
Consider a region other than $\XX$, say $\YY$. Since $\QQ$ is superbalanced, the sum of the coefficients of the terms in $\QQ$ containing both $\XX$ and $\YY$ vanishes. These are the same as the terms in $\ANR{\QQ}{\XX}$ containing $\YY$. So $\ANR{\QQ}{\XX}$ is balanced. Since the sum of the coefficients of all the terms in $\QQ$ containing $\YY$ vanishes, so does the sum of the coefficients of the terms containing $\YY$ but not $\XX$. These are the terms in $\ANRC{\QQ}{\XX}$ containing $\YY$. So $\ANRC{\QQ}{\XX}$ is also balanced.
\end{proof}

We will now argue that $\ANR{\QQ}{\XX}$ and $\ANRC{\QQ}{\XX}$ separately vanish on $\vec{\mathsf{S}}_{\rm vac}$. Actually, for $\ANRC{\QQ}{\XX}$, we will show something stronger: it vanishes on $\vec{\mathsf{S}}_\psi$ for any finite-energy state $\psi$. Thus, its vanishing is simply a consequence of the LCC, while that of $\ANR{\QQ}{\XX}$ depends on the state being the vacuum. We begin with $\ANRC{\QQ}{\XX}$.

For each term in $\ANRC{\QQ}{\XX}$, the corresponding region is null and therefore does not have a well-defined entropy, as its proper size in the null direction is less than the UV cutoff; in holography, this is reflected in the fact that the HRT surface is degenerate and hugs the boundary.\footnote{One might be tempted to assign such null regions an entropy of 0. However, this is not correct, as can be seen by considering a null triangle with sides $A,B$. $AB$ is a spacelike region and therefore has a positive entropy. If we assigned entropy 0 to $A$ and $B$, we would get a contradiction with subadditivity. Instead, we should simply say that it has an undefined entropy. We thank T. Takayanagi for pointing this out.} We can regulate their entropies by making $L$ slightly spacelike. The key fact we will use is that, for a composite region, the entropies add up: the MIs all vanish, as the size of each region is very small compared to their separation; in holography, the (degenerate) HRT surface is the union of those of the elementary regions. Since $\ANRC{\QQ}{\XX}$ is balanced, it vanishes on $\vec{\mathsf{S}}_\psi$. Note that this statement does not depend on the state $\psi$ (for finite-energy states); indeed, in holography the degenerate HRT surfaces hugging the boundary are insensitive to the metric in the interior of the bulk.

Having shown that
\be\label{QAcomp}
\ANRC{\QQ}{\XX}(\vec{\mathsf{S}}_\psi)=0
\ee
for any state, we will henceforth focus on the more interesting quantity $\ANR{\QQ}{\XX}$, which we call the \emph{null reduction} of $\QQ$ on $\XX$. The null reduction on any given region is a linear projection operator on the dual entropy space.

We now argue that
\be\label{QASvac}
\ANR{\QQ}{\XX}(\vec{\mathsf{S}}_{\rm vac})=0\,.
\ee
For each term in $\ANR{\QQ}{\XX}$, we can appeal to the Markov property, which says that a region $\I$ on a null cone containing the vertex $p$ has the following modular Hamiltonian \cite{Casini:2017roe}:
\be\label{Markov}
H_{\rm vac}(\I)=2\pi\int_{S^{d-2}} d\Omega\int_0^{\lambda_\I(\Omega)}d\lambda\,\lambda^{d-1} \, \frac{\lambda_\I(\Omega)-\lambda}{\lambda_\I(\Omega)} \, T_{\lambda\lambda}\,,
\ee
where $\Omega$ is the angle on $L$, $\lambda$ the affine parameter along the null ray with $p$ at $\lambda=0$, $\lambda_\I(\Omega)$ the value of $\lambda$ at the boundary of $\I$, and $T_{\lambda\lambda}$ the stress tensor component in the $\lambda$ direction. This implies that the entropy can be written in the following form:
\be\label{S1}
\ent{\I} = \int_{S^{d-2}} d\Omega\,s(\lambda_\I(\Omega))\,,
\ee
where $s$ a function that can be read off from \eqref{Markov} but whose form is not important for us. Denoting the non-central regions by $\YY_i$ ($i=1,\ldots,\N-1$), we can write \eqref{S1} as
\be\label{SX}
\ent{\I}= \int_{R_0}d\Omega\,s(\lambda_\XX(\Omega))+
\sum_{B_i\not\subseteq X}\int_{R_i}d\Omega\,s(\lambda_\XX(\Omega))+\sum_{B_i\subseteq X}\int_{R_i}d\Omega\,s(\lambda_i(\Omega))\,,
\ee
where $R_i\subset S^{d-2}$ is the set of angles subtended by $\YY_i$, $R_0:=S^{d-2}\setminus(\cup_{i=1}^{\N-1}R_i)$ is the set of angles not subtended by any of the $\YY_i$, and $\lambda_i(\Omega):=\lambda_{\XX\YY_i}(\Omega)$ is the outer boundary of $\YY_i$. For the null-reduced information quantity $\ANR{\QQ}{\XX}$, we then have
\begin{align}
\ANR{\QQ}{\XX}(\vec{\mathsf{S}}_{\rm vac}) &=\sum_{\I}q_\I\ent{\I} \nonumber\\
&=\int_{R_0}d\Omega\left(\sum_\I q_\I\right)s(\lambda_\XX(\Omega)) \nonumber\\
&\qquad\qquad+\sum_i\int_{R_i}d\Omega\left(\sum_{\I\not\supseteq\YY_i}q_\I\right)s(\lambda_\XX(\Omega))
+\sum_i\int_{R_i}d\Omega\left(\sum_{\I\supseteq\YY_i}q_\I\right)s(\lambda_i(\Omega)) \nonumber\\
&=0\,,
\end{align}
where $q_\I$ is the coefficient of $\I$ in $\ANR{\QQ}{\XX}$ (i.e.\ $q_\I=Q_\I$ if $X\supseteq\XX$ and 0 otherwise), the second equality follows from \eqref{SX} and rearranging terms, and the last equality follows from the fact that $\ANR{\QQ}{\XX}$ is balanced, so each sum in large parentheses vanishes.

The above argument did not refer to holography and holds in any CFT. However, it is instructive to see why $\ANR{\QQ}{\XX}(\vec{\mathsf{S}}_{\rm vac})=0$ in holography. Since we are in the vacuum, the bulk is AdS, and all the HRT surfaces lie on $\mathscr{L}=\partial I_{\rm bulk}^+(p)$, the future light cone in the bulk of $p$. In fact, each HRT surface intersects every null generator of $\mathscr{L}$ exactly once. Furthermore, $\mathscr{L}$ has vanishing expansion, so the area element is constant along any geodesic. So any geodesic makes a contribution to $\ANR{\QQ}{\XX}$ proportional to the sum of all the coefficients in $\ANR{\QQ}{\XX}$, which (again by balance) vanishes. See figure \ref{fig:light-cone-surfaces} for a visualization of the HRT surfaces that saturate MMI as an example.\footnote{To be slightly more careful, we should take care of the UV regulator. We are assuming a regulator that, for every null ray of $\mathcal{L}$, either includes or excludes the entire ray. For example, we could use a bulk cutoff surface that is a timelike hyperplane in Poincar\'e coordinates that intersects the boundary on a spacelike boundary hyperplane containing $p$. The important point is that, while the entropy vector $\vec{\mathsf{S}}_{\rm vac}$ depends on the regulator, since $\QQ$ is superbalanced, $\QQ(\vec{\mathsf{S}}_{\rm vac})$ does not.}

In the next section, we will perturb the bulk metric to try to violate the inequality. Before doing so, in the rest of this section, we will present some examples and general properties of null reductions of superbalanced information quantities. We will see that they are quite interesting objects in their own right.

\subsection{Examples of null reductions} 
 
To gain some intuition about null reductions, in this subsection we will give a few examples of null reductions of sHIQs.
\begin{eg}
Consider MMI:
\begin{equation}
    \mathbf{Q}_{\text{MMI}} = \ent{AB} +\ent{BC} + \ent{AC} - \ent{A} - \ent{B} - \ent{C} - \ent{ABC} \,.
\end{equation}
Its three null reductions are
\begin{align}
\ANR{\QQ^{^\text{MMI}}}{A}  = \ent{AB} + \ent{AC} - \ent{A} - \ent{ABC} &= \mi(B:C|A)\label{MMINRA}\\
    \ANR{\QQ^{^\text{MMI}}}{B} = \ent{AB} + \ent{BC} - \ent{B} - \ent{ABC} &= \mi(A:C|B)\\
    \ANR{\QQ^{^\text{MMI}}}{C} = \ent{BC} + \ent{AC} - \ent{C} - \ent{ABC} &= \mi(A:B|C)\,.
\end{align}
Notice that each null reduction is a non-negative information quantity, by SSA.
\end{eg}
\begin{eg}\label{Q5eg}
Consider the following primitive sHIQ on five regions
\begin{multline}
    \mathbf{Q}^{[5]} = \ent{ABC} + \ent{ABD}  + \ent{ACE}  + \ent{BCD} + \ent{BCE}\\
    - \ent{A} - \ent{BC} - \ent{BD} - \ent{CE} - \ent{ABCD} - \ent{ABCE}\,.
\end{multline}
Here and throughout the paper we reference sHIQs from the list of known sHIQs publicly available in \cite{hecdata}; the superscript in the above corresponds to the position of the information quantity in the list. Its null reductions are:
\begin{align}
    \ANR{\QQ^{[5]}}{A} &= \ent{ABC} + \ent{ABD}  + \ent{ACE} - \ent{A}   - \ent{ABCD} - \ent{ABCE}\nonumber \\
    &= \mi(C:D|AB) + \mi(B:CE|A) \label{Q5NRA} \\
    \ANR{\QQ^{[5]}}{B} &= \ent{ABC} + \ent{ABD} + \ent{BCD} + \ent{BCE}\nonumber\\  &\qquad\qquad\qquad\qquad\qquad- \ent{BC} - \ent{BD}-\ent{ABCD}-\ent{ABCE} \nonumber \\
&    = \mi(A:E|BC) + \mi(A:C|BD)\\
    \ANR{\QQ^{[5]}}{C} &= \ent{ABC} + \ent{ACE}  + \ent{BCD} +
    \ent{BCE}\nonumber\\
    &\qquad\qquad\qquad\qquad\qquad- \ent{BC} - \ent{CE} - \ent{ABCD}- \ent{ABCE}\nonumber \\
    &= \mi(A:D|BC) + \mi(A:B|CE)\\
    \ANR{\QQ^{[5]}}{D} &= \ent{ABD}  + \ent{BCD} - \ent{BD} - \ent{ABCD} \nonumber \\
    &= \mi(A:C|BD)\\
    \ANR{\QQ^{[5]}}{E} &= \ent{ACE} + \ent{BCE} - \ent{CE}  - \ent{ABCE} \nonumber\\
    &= \mi(A:B|CE)\,.
\end{align}
Again, each expression is non-negative by SSA.
\end{eg}

Based on these two examples, the reader may be tempted to guess that null reduction of an sHIQ always yields a positive sum of CMIs. This is not true, as it can be seen in the following example:
\begin{eg}\label{Q7eg}
Here is another 5-party primitive sHIQ:
\begin{align}\label{eq:Q7}
   \mathbf{Q}^{[7]} =  & \,\ent{AD}+\ent{BC} + \ent{ABE}+\ent{ACE}+\ent{ADE}+\ent{BDE}+\ent{CDE}\nonumber\\
   &\qquad-\ent{ABDE}-\ent{ACDE}-\ent{BCE} \nonumber\\
&\qquad   -\ent{AE}-\ent{DE}
   -\ent{A}-\ent{B}-\ent{C}-\ent{D}\,.
\end{align}
Null-reducing on $E$, we find
\begin{align}
\ANR{\QQ^{[7]}}{E}&=\ent{ABE}+\ent{ACE}+\ent{ADE}+\ent{BDE}+\ent{CDE}\nonumber \\
&\qquad\qquad -\ent{ABDE}-\ent{ACDE}-\ent{BCE}-\ent{AE}-\ent{DE} \nonumber
\\ \label{eq:Q72}
&=\mathbf{Q}^{[7]} +\mi(B:C)+\mi(A:D)\,.
\end{align}
While the null reduction in this case is not a sum of CMIs, it is still an sHIQ. Indeed, we find this to be a general feature. Based on extensive tests, we conjecture that the null reductions of any superbalanced sHIQ are sHIQs. We also find that the converse statement holds: a superbalanced information quantity is an sHIQ \emph{only if} all of its null reductions are. These conjectures and the evidence for them will be discussed in sec.\ \ref{sec:results}.
\end{eg}

\section{The majorization test}\label{sec:bulk-perturbations}

Let us recap what we have done so far. We have considered configurations that saturate holographic entropy inequalities. We have learnt that configurations that are most susceptible to potentially being violated under a bulk perturbation are the LCCs from subsection \ref{sec:light-cone}. We have seen that when evaluated on such configurations, the sHEIs reduce to their null reduction on some given party. The question that remains to be answered is: can we violate such an inequality by perturbing the bulk slightly? To answer the question we first need to understand how these surfaces look on the bulk light-cone $\mathscr{L}$, and then how their areas change under bulk metric perturbations. Restricting bulk perturbations to those that respect the null energy condition will lead us to the majorization test.

\subsection{HRT surfaces}
\label{sec:HRTsurfaces}

The metric for the Poincar\'e patch of AdS${}_{d+1}$ ($d\ge2$) can be written as
\be
ds^2 = \frac1{\cos^2\theta}\left(-2du\,dv-u^2\,dv^2+d\theta^2+\sin^2\theta\,d\Omega_{d-2}^2\right),
\ee
where $u\in(-\infty,0)$, $v\in\R$, $\theta\in[0,\pi/2)$, $\Omega\in S^{d-2}$. These are related to the usual Poincar\'e coordinates by
\be
z=-\frac{\cos\theta}u\,,\qquad
x^0=-\frac1u+v\,,\qquad
\vec x=-\frac{\sin\theta}u\hat x(\Omega)\,,
\ee
where $\hat x(\Omega)$ is the unit vector in $\R^{d-1}$ at the angle $\Omega$. With $p$ at $z=x^0=x^i=0$, its future light cone $\mathscr{L}$ is at $v=0$, with $\theta,\Omega$ specifying the null ray and $u$ an affine parameter along the ray. The boundary is at $\theta=\pi/2$, and the boundary light cone $L$ with the same vertex $p$ is at $v=0$, with $\Omega$ specifying the null ray and $u$ a (non-affine) parameter along the ray. ($\lambda=-1/u$ is an affine parameter for the boundary ray.) Extending both the bulk and boundary light cones until they reconverge at the point $q$ on the boundary, as shown in figure \ref{fig:light-cone-surfaces}, extends $u$ to cover the whole real line.

Given a boundary region $\I\subset L$ containing $p$, with boundary at $u=u^*(\Omega)$, the HRT surface $\gamma_\I$ lies on $\mathscr{L}$, intersecting each ray exactly once, and can therefore be expressed as a function $u(\theta,\Omega)$ obeying the boundary condition $u(\pi/2,\Omega)=u^*(\Omega)$. This function is determined by the following elliptic PDE on the hemisphere parametrized by $\theta,\Omega$:\footnote{Note that the PDE \eqref{HRTPDE} is regular at $\theta=0$, where it locally reduces to the flat Laplacian in spherical coordinates in $d-1$ dimensions. Also, for $d\neq3$, \eqref{HRTPDE} can be written $\tilde\nabla^2u=0$, where $\tilde\nabla$ is the scalar Laplacian with respect to the following metric on the hemisphere:
\be
d\tilde s^2 = (\cos\theta)^{-2(d-1)/(d-3)}(d\theta^2+\sin^2\!\theta \,d\Omega_{d-2}^2)\,.
\ee}
\begin{equation}\label{HRTPDE}
\partial_\theta\left(\frac{\sin^{d-2}\theta}{\cos^{d-1}\theta}\partial_\theta u\right)+\frac{\sin^{d-4}\theta}{\cos^{d-1}\theta}\nabla^2_{\Omega} u=0\,,
\end{equation}
where $\nabla^2_{\Omega}$ is the Laplacian on $S^{d-2}$. The important feature of the PDE \eqref{HRTPDE} for our purposes is that it is linear in $u$. Because of this linearity, the solution $u(\theta,\Omega)$ is a linear functional of the boundary function $u^*(\Omega)$. Furthermore, this functional is strictly monotone: If $u^*_1\ge u^*_2$ for all $\Omega$ with $u_1^*>u_2^*$ for some $\Omega$, then $u_1> u_2$ for all $\theta\in[0,\pi/2)$ and all $\Omega$. This follows from the maximum principle for elliptic PDEs, which implies that if the boundary value $u^*_1-u_2^*$ is everywhere non-negative and somewhere positive then the corresponding solution $u_1-u_2$ is everywhere positive in the interior of the domain.

Given an LCC with central region $\XX$ and other regions $\YY_i$ ($i=1,\ldots,\N-1$), we define the functions
\be
b^*_i(\Omega):=\begin{cases}
u^*_i(\Omega)-u_\XX^*(\Omega)\,,\quad&\Omega\in R_i \\
0\,,\quad&\Omega\not\in R_i
\end{cases}\,,
\ee
where $R_i$ is the subset of $S^{d-2}$ subtended by $\YY_i$, $u_\XX^*(\Omega)$ is the boundary of $\XX$, and $u_i^*(\Omega)$ is the outer boundary of $\YY_i$. (In terms of the boundary affine parameter $\lambda=-1/u$ and the functions $\lambda_\XX$, $\lambda_i$ used in subsection \ref{sec:saturation} to specify the boundaries of the regions, we have $u_\XX^*(\Omega)=-1/\lambda_\XX(\Omega)$, $u_i^*(\Omega)=-1/\lambda_i(\Omega)$.) Then, for any composite region $\I$ containing $p$,
\be
u^*_\I(\Omega)=u_\XX^*(\Omega)+\sum_{\YY_i\subseteq \I}b_i^*(\Omega)\,.
\ee
We now define the functions $u_\XX(\theta,\Omega)$, $b_i(\theta,\Omega)$ as the solutions to \eqref{HRTPDE} with boundary conditions (at $\theta = \pi/2$) $u_\XX^*(\Omega)$, $b_i^*(\Omega)$ respectively. The HRT surface for $\I$ is then given by
\be\label{udef}
u_\I(\theta,\Omega)=u_\XX(\theta,\Omega)+\sum_{\YY_i\subseteq \I}b_i(\theta,\Omega)\,.
\ee
Note that, since $b_i^*$ is non-negative, $b_i$ is positive.

\subsection{Bulk perturbations}

Let's now perturb the bulk and ask how the entropy vector for our LCC changes. Specifically, since the HRT surface is extended in the $\theta$ and $\Omega$ directions, we perturb those parts of the metric:
\begin{equation}\label{eq:perturbation}
\delta g_{\mu\nu} = 2\epsilon \wdef_{\mu\nu}\,,
\end{equation}
where $\wdef_{\mu\nu}$ is a function of all of the coordinates and is non-zero only for $\mu,\nu=\theta,\Omega$, and we take $\epsilon>0$ to characterize the size of the deformation (subsequently treating it as a bookkeeping parameter to focus on the linear order in the perturbation expansion).  Since $\gamma_\I$ is extremal, the change in its area $|\gamma_\I|$ at first order in $\epsilon$ is simply due to the change in the metric at its coordinate position, which is:
\begin{equation}\label{areaelement1}
\delta|\gamma_\I| = \epsilon\int d\theta \,d\Omega\,\frac{\sin^{d-2}\theta}{\cos^{d-1}\theta}\,\wdef\,,
\end{equation}
where $h:=g^{\mu\nu}h_{\mu\nu}$ is evaluated at $u=u(\theta,\Omega)$, $v=0$. Henceforth, since only the value of $h$ on the light-cone is relevant, we fix $v=0$. We also make the abbreviations
\be
\tOmega:=(\theta,\Omega)\,,\qquad
d\tOmega:=d\theta\,d\Omega\,\frac{\sin^{d-2}\theta}{\cos^{d-1}\theta}\,,
\ee
so \eqref{areaelement1} becomes
\be\label{areaelement}
\delta|\gamma_\I|=\epsilon\int d\tOmega\,\wdef\,.
\ee
Define the function $\vec s(\tOmega)$, valued as an entropy-space vector, by its components
\be
s(\tOmega)_\I:=\begin{cases}
\wdef(u_\I(\tOmega),\tOmega)/4G_{\rm N}\,,\quad&\XX\subseteq \I \\
0\,,\quad&\XX\not\subseteq \I
\end{cases}\,.
\ee
We then have, from \eqref{areaelement},
\be\label{deltaS}
\delta\vec{\mathsf{S}} = \epsilon\int d\tOmega
\,\vec s(\tOmega)\,.
\ee

Now let $\QQ$ be a superbalanced information quantity. We then have:
\be\label{deltaQ}
\QQ(\vec{\mathsf{S}}_\psi)=\ANR{\QQ}{\XX}(\vec{\mathsf{S}}_\psi) = \ANR{\QQ}{\XX}(\delta\vec{\mathsf{S}})+O(\epsilon^2) = \epsilon\int d\tOmega\,\ANR{\QQ}{\XX}(\vec s(\tOmega))+O(\epsilon^2)\,,
\ee
where in the first equality we used \eqref{Qdecomp} and \eqref{QAcomp}, in the second one \eqref{QASvac}, and in the last one \eqref{deltaS} and the linearity of $\ANR{\QQ}{\XX}$.

A couple of notational housekeeping items will make our task less laborious going forward. First, from \eqref{deltaQ}, we see that the different values of $\tOmega$, i.e.\ the different null rays of $\mathscr{L}$, make independent contributions. Therefore, we henceforth fix a value of $\tOmega$ and drop the explicit dependence on it. We should therefore think of $\wdef$ as a function solely of $u$, defined on the interval $(-\infty,0)$. Second, we will asssume that $\QQ$ has integer coefficients. This is not an essential assumption, and we will relax it at the end of the subsection, but it will simplify the presentation. Accordingly, we write the information quantity $\ANR{\QQ}{\XX}$ so that every term has coefficient $\pm1$:
\be\label{InJndef}
\ANR{\QQ}{\XX}=\sum_{n=1}^N\mathsf{S}(\I_n)-\sum_{n=1}^N\mathsf{S}(\J_n)\,,
\ee
where $\I_n$ are the composite regions appearing in $\ANR{\QQ}{\XX}$ with positive coefficients and $\J_n$ are those appearing with negative coefficients. The same region $\I$ will appear multiple times in the list of $\I_n$s if $\mathsf{S}(\I)$ has coefficient greater than 1 in $\ANR{\QQ}{\XX}$, and similarly for the negative coefficients. From the fact that the total of the coefficients of $\ANR{\QQ}{\XX}$ vanishes, we know that the number $N$ of $\I_n$s (not to be confused with the number of elementary regions $\N$) equals the number of $\J_n$s. In this notation, \eqref{deltaQ} tells us that the inequality $\QQ(\vec{\mathsf{S}})\ge0$ is safe against the perturbation \eqref{eq:perturbation} as long as
\be\label{fundamental}
\sum_{n=1}^N \wdef(u_{\I_n})\ge\sum_{n=1}^N \wdef(u_{\J_n})\, ,
\ee
where,  for any region $X$, by \eqref{udef}, $u_\I=u_\XX+\sum_{\YY_i\subseteq \I}b_i$.

Let us see what \eqref{fundamental} looks like in a couple of examples.
\begin{eg}
The null reduction on $A$ of MMI is
\be
\ANR{\QQ^{\rm MMI}}{\XX}= \ent{AB}+\ent{AC}- \ent{A}-\ent{ABC}
\ee
(see \eqref{MMINRA}), so \eqref{fundamental} becomes
\be\label{MMIfundamental}
\wdef(a+b)+\wdef(a+c)\ge \wdef(a)+\wdef(a+b+c)\,,
\ee
where for notational convenience we wrote $u_A$ as $a$, $b_B$ as $b$, etc.
\end{eg}
\begin{eg}
We look at the null reduction on $A$ (as in the right side of figure  \ref{fig:light-cone-config}) of the 5-party sHIQ $\QQ^{[5]}$ discussed in example \ref{Q5eg} above (see \eqref{Q5NRA}):
\be
\ANR{\QQ^{[5]}}{\XX}= \ent{ABC}+\ent{ABD}+\ent{ACE}-\ent{A}-\ent{ABCD}-\ent{ABCE}\,.
\ee
The inequality \eqref{fundamental} then takes the following form:
\begin{align}\label{Q5fundamental}
     \wdef(a + b + c) + \wdef(a+b+d)& + \wdef(a+c+e)\nonumber\\
     &\ge \wdef(a) + \wdef(a+b+c+d) + \wdef(a + b + c + e)\,.
\end{align}
\end{eg}

\subsection{Null energy condition \& majorization}

Let us now take a closer look at \eqref{fundamental}. 
The arguments $u_{\I_n}$ appearing on the left-hand side are all distinct from the ones $u_{\J_n}$ appearing on the right-hand side, else they would have cancelled in $\ANR{\QQ}{\XX}$ (more precisely, they are distinct as functions of  $a,b,c,\ldots$, which means that generically they have distinct values).
Therefore we can easily choose a function $\wdef$ that violates \eqref{fundamental}; for example, we can choose $\wdef$ to be negative at some $u_{\I_n}$ and zero at all other $u_{\I_n}$s and at all $u_{\J_n}$. It would therefore appear to be very easy to construct a non-static counterexample to any sHEI.

We know the argument in the previous paragraph must be incorrect, because we know that HRT entropies obey MMI \cite{Wall_2014,Bousso:2024ysg}. The resolution to this puzzle is that the bulk metric is not arbitrary: since we are working in classical Einstein gravity, it must obey the Einstein equation with matter obeying the null energy condition. Without these conditions, it is indeed easy for the HRT formula to violate MMI (and indeed subadditivity, SSA, and many other consistency conditions \cite{Callan:2012ip}). 
Specifically, the Einstein equation and NEC constrain the function $\wdef$ defining our perturbation \eqref{eq:perturbation} via the focusing lemma, which says that the expasion $\Theta$ of a null geodesic congruence obeys
\be
\Theta' \le -\frac{\Theta^2}{d-1}\,,
\ee
where the derivative is with respect to the affine parameter. (Note that, under the perturbation \eqref{eq:perturbation}, the curves $v=0$, constant $\tilde\Omega$ remain null geodesics, with $u$ as an affine parameter.) The expansion is the logarithmic derivative of the area element; in our case, from \eqref{areaelement}, the area element is $1+\epsilon \wdef$, so
\be
\Theta=\epsilon \wdef'+O(\epsilon^2)\,.
\ee
Thus, focusing implies that $h$ is concave.

Now return to the MMI example \eqref{MMIfundamental}. Recalling that $b$ and $c$ must be positive, the two arguments of $\wdef$ appearing on the LHS, $a+b$, $a+c$, lie in between those appearing on the RHS $a$, $a+b+c$, with the same mean value $a+(b+c)/2$. This implies that the LHS values are \emph{majorized} by the RHS ones,
\be\label{MMImajorize}
(a+b,a+c)\prec (a,a+b+c)\,.
\ee
(See appendix \ref{sec:review-majorization} for the definition of majorization and its key properties. Notice the change in direction of the inequality sign: $\ge$ became $\prec$.) Together with the concavity of $\wdef$, \eqref{MMImajorize} implies that \eqref{MMIfundamental} is true: the MMI inequality is safe (as we already knew).

For the $\QQ^{[5]}$ example \eqref{Q5fundamental}, again we see that the arguments on the LHS and RHS have the same mean value, namely $a+(2b+2c+d+e)/3$. The equality of the mean values is a consequence of balance of the null-reduced information quantity, so it will always hold. Again, for any positive $b,c,d,e$, the arguments on the LHS are majorized by those on the RHS:
\be
(a+b+c,a+b+d,a+c+e)\prec(a,a+b+c+d,a+b+c+e)\,.
\ee
And, again, together with the concavity of $\wdef$, this implies that \eqref{Q5fundamental} is true. So $\QQ^{[5]}$ is safe for any LCC with central region $A$.

The general theorem relating majorization to inequalities obeyed by concave functions is the following:
\begin{thm}[Karamata]
\label{thm:karamata}
Given vectors $(x_1,\ldots,x_N)$, $(y_1,\ldots,y_N)$ such that $\sum x_n=\sum y_n$, $\vec x\prec\vec y$ if and only if, for any concave function $h$, 
\be\label{finequality}
\sum_{n=1}^N \wdef(x_n)\ge\sum_{n=1}^N \wdef(y_n)\,.
\ee
\end{thm}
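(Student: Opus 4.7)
The plan is to establish the two directions of the equivalence separately. Throughout, I would assume without loss of generality that both $(l_n)$ and $(r_n)$ are sorted in decreasing order, since both sides of \eqref{finequality} and the majorization condition are permutation-symmetric.

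For the forward direction (majorization $\Rightarrow$ the inequality), the plan is to combine pointwise concavity with Abel summation. At each $l_n$, pick a supergradient $c_n$ of $\wdef$, so that $\wdef(r_n)-\wdef(l_n) \le c_n(r_n-l_n)$. Concavity of $\wdef$ makes its supergradients non-increasing in their argument; since $l_1 \ge l_2 \ge \cdots$, the $c_n$ can be chosen to form a non-decreasing sequence. Setting $D_k := \sum_{n=1}^{k}(r_n-l_n)$, Abel summation gives
\begin{equation*}
\sum_{n=1}^{N} c_n (r_n - l_n) \;=\; c_N D_N + \sum_{n=1}^{N-1} (c_n - c_{n+1})\, D_n .
\end{equation*}
Majorization together with the equal-sums hypothesis give $D_k \ge 0$ for all $k$ and $D_N=0$, while $c_n - c_{n+1}\le 0$. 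Hence the right-hand side is $\le 0$, and summing the pointwise bound yields $\sum_n \wdef(r_n) \le \sum_n \wdef(l_n)$.

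For the reverse direction, specialize the hypothesis to the one-parameter family of concave test functions $\wdef_t(x) := -(x-t)_+$ for $t \in \R$; each is the minimum of the two affine functions $0$ and $t-x$, hence concave. The hypothesis yields $g_r(t) \ge g_l(t)$ for every $t$, where $g_x(t) := \sum_n (x_n - t)_+$. I would then use the Legendre-type identity
\begin{equation*}
\sum_{n=1}^{k} x^{\downarrow}_n \;=\; \min_{t \in \R}\left[\, g_x(t) + k t \,\right],
\end{equation*}
which follows from $g_x(t) + kt \ge \sum_{n=1}^{k}(x^{\downarrow}_n-t)_+ + kt \ge \sum_{n=1}^{k} x^{\downarrow}_n$, with equality attained at $t = x^{\downarrow}_k$. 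Applying this identity to both $r$ and $l$ and invoking $g_r \ge g_l$ yields $\sum_{n=1}^{k} r^{\downarrow}_n \ge \sum_{n=1}^{k} l^{\downarrow}_n$ for every $k$, which together with the equal-sums assumption is precisely majorization.

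I do not expect a genuine conceptual obstacle, since both implications are classical; the only subtleties are routine. Ties in the $l_n$ mean the supergradient at a single point is a closed interval rather than a single value, and must be chosen consistently in order to preserve monotonicity of $(c_n)$; similarly, the minimizer in the Legendre identity may be non-unique. Neither affects the argument. An alternative route for the forward direction bypasses Abel summation entirely by invoking the Hardy--Littlewood--P\'olya decomposition of majorization into a finite sequence of \emph{Robin Hood transfers} --- replacing a pair $r_i>r_j$ by $r_i-\epsilon,\,r_j+\epsilon$ --- each of which weakly increases $\sum_n \wdef(\cdot)$ by concavity. This version is more transparent but rests on a separate (and itself classical) characterization of majorization.
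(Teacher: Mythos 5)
Your proof is correct and complete. Note that the paper does not actually supply a proof of Theorem~\ref{thm:karamata}: it is invoked as the classical Karamata (Hardy--Littlewood--P\'olya) theorem, and appendix~\ref{sec:review-majorization} only lists the equivalent characterizations of majorization without proof, so your argument is a genuine addition rather than a parallel to an existing one. Both halves are the standard textbook route and check out. In the forward direction, the supergradient bound $h(r_n)-h(l_n)\le c_n(r_n-l_n)$ together with Abel summation against the partial sums $D_k\ge 0$, $D_N=0$ is airtight; the only delicacy, which you correctly flag, is picking supergradients consistently at tied values of $l_n$ so that $(c_n)$ stays monotone. In the reverse direction, your specialization to the concave test functions $-\max(0,x-t)$ and the identity $\sum_{n\le k}x^{\downarrow}_n=\min_{t}\left[\sum_n\max(0,x_n-t)+kt\right]$ recovers exactly the operational criterion \eqref{majmax} that the paper uses in its appendix and in its \emph{Mathematica} implementation, so as a byproduct your proof certifies that that criterion (together with equal sums) really is equivalent to majorization --- something the paper asserts but does not justify. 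One small observation: the concave-function characterization as written in appendix~\ref{sec:review-majorization} has its inequality oriented oppositely to Theorem~\ref{thm:karamata} (it asserts $\sum_n h(x_n)\le\sum_n h(y_n)$ when $\vec y\succ\vec x$); your version agrees with the main-text statement, which is the correct orientation, so the appendix sign appears to be a typo in the paper rather than an error on your part.
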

\noindent For a given information quantity, we thus need to test whether the quantities $u_{\I_n}$ from the LHS are majorized by the ones $u_{\J_n}$ from the RHS. Actually, since the term $u_\XX$ appears in all of the $\I_n$s and $\J_n$s, it has no effect on the majorization condition.\footnote{In the version of the majorization test given in the introduction, for ease of presentation a positive quantity corresponding to the central region was included. Again, since by definition the central quantity appears in every term in the null-reduced inequality, whether such a quantity is included or not has no effect on the majorization property.} Thus, writing $\mb{b}=(b_i)_{i=1}^{\N-1}$ and defining\footnote{Defining the $N\times(\N-1)$ matrices $\mathbf{X}$, $\mb Y$ by
\be
X_{ni} = \begin{cases}0\,,&B_i\notin X_n \\1\,,&B_i\in X_n \end{cases}\,,\qquad
Y_{ni} = \begin{cases}0\,,&B_i\notin Y_n \\1\,,&B_i\in Y_n \end{cases}\,,
\ee
we have $\vec x=\mb X\mb b$, $\vec y = \mb Y\mb b$. In the majorization literature, given matrices $\mb X$, $\mb Y$, requiring $\mb X\mb b\prec\mb Y\mb b$ for all positive vectors $\mb b$ is called \emph{positive directional majorization}, \emph{positive-combinations majorization}, or \emph{price majorization}. See section 15.A of \cite{MR2759813} for discussion, examples, and further references. We thank G. Dahl and P. Shteyner for helpful correspondence on this.}
\be\label{xnyndef}
x_n(\mb{b}):=\sum_{\YY_i\in\I_n}b_i\,,\qquad
y_n(\mb b):=\sum_{\YY_i\in\J_n}b_i\qquad
(n=1,\ldots,N)\,,
\ee
is it sufficient to check whether
\be\label{majorization2}
\vec x(\mb b)\prec\vec y(\mb b)\,.
\ee
(Note that if the term $\mathsf{S}(A)$ appears in $\QQ$, and therefore in $\ANR{\QQ}{\XX}$, then the corresponding component $x_n$ or $y_n$ will be 0.) If \eqref{majorization2} holds for all positive numbers $b_i$, then $\QQ(\mathsf{S}_\psi)\ge0$, at first order in $\epsilon$, for any LCC with central region $\XX$, and we say that $\QQ$ \emph{passes the majorization test for $\XX$}.

In fact, we can reverse the logic to show that the converse holds: By the Karamata theorem, if there exists a set of positive numbers $b_i'$ such that \eqref{majorization2} is false, then, for any number $u_\XX'$, there exists a concave function $\wdef$ such that \eqref{fundamental} is false. Furthermore, for any given value $\tOmega'$ of $\tOmega$, there exists an LCC with central region $\XX$ such that $u_\XX(\tOmega')=u_\XX'$ and $b_i(\tOmega')=b_i'$, which can be found by starting with any LCC and multiplying the functions $u^*(\Omega)$, $b_i^*(\Omega)$ by appropriate numbers. For this configuration, $\QQ(\mathsf{S}_\psi)<0$. Therefore the inequality is safe against this type of counterexample if and only if it passes the majorization test for $\XX$.

We say that $\QQ$ is \emph{LCC-safe} if it passes the majorization test for every elementary region. To recap, the procedure for determining LCC-safety is the following. For each elementary region:
\begin{enumerate}
\item null reduce $\QQ$ on that region;
\item define the vectors $\vec x(\mb{b})$, $\vec y(\mb{b})$ by \eqref{xnyndef}, where the $B_i$ are the other elementary regions and the $X_n$, $Y_n$ are the terms on the LHS and RHS of the null-reduced inequality respectively;
\item test whether \eqref{majorization2} holds for all positive numbers $b_i$.
\end{enumerate}
If this is true for all elementary regions, then $\QQ$ is LCC-safe.

By definition, the LCC-safe property is invariant under $S_\N$ permutations of the parties, but not under $S_{\N+1}$ permutations involving the purifer. If we treat the purifier democratically, then a null reduction should be thought of as depending on \emph{two} regions, which contain the past and future vertices $p,q$ of the light cone $L$ respectively. From this democratic viewpoint, each term in $\QQ$ depends on a bipartition of the $\N+1$ regions, and the terms that are retained in the null reduction are those in which the two regions are separated by the bipartition. The vectors $\vec x(\mb{b})$, $\vec y(\mb{b})$ appearing in the majorization condition \eqref{majorization2} depend on the subsets $X_n$, $Y_n$ and therefore on which region is labelled the ``purifier'' and which one the ``central region''. However, we will now show that the majorization condition itself is actually independent of this choice. If we switch the purifier and central region, then each composite region $X_n$, $Y_n$ appearing in the null-reduced quantity gets replaced by the complementary region:
\be
X_n\to X_n'=\mathcal{A}\setminus X_n\,,\qquad
Y_n\to Y_n'=\mathcal{A}\setminus Y_n\,,
\ee
where $\mathcal{A}$ is the full set of $\N+1$ regions. Therefore, the vectors $\vec x(\mb{b})$, $\vec y(\mb{b})$ appearing in the majorization relation \eqref{majorization2} are replaced by $\vec x'(\mb{b})$, $\vec y{\,}'(\mb{b})$, where
\be
x_n'(\mb{b}) = \sum_{B_i\in X_n'}b_i=\sum_{i=1}^{\N-1}b_i-x_n(\mb b)\,,\qquad
y_n'(\mb{b}) = \sum_{B_i\in Y_n'}b_i=\sum_{i=1}^{\N-1}b_i-y_n(\mb b)\,.
\ee
Since majorization is invariant under negating both vectors, and also invariant under adding a constant to all components of both vectors, \eqref{majorization2} is indeed invariant under this transformation. All in all, for a given orbit, there are thus $\N+1\choose2$ independent majorization tests.

The generalization of the majorization test to an information quantity with non-integer coefficients involves the notion of weighted majorization, reviewed in appendix \ref{sec:review-majorization}. Here we relax the condition that the vectors $\vec x$ and $\vec y$ have the same length; call their lengths $M$, $N$ respectively. We now have weight vectors $\vec\alpha$, $\vec\beta$, where $\vec\alpha$ has length $M$ as $\vec x$ and $\vec\beta$ length $N$. The vectors $\vec\alpha$, $\vec\beta$ must have positive components and obey
\be\label{equaltots}
\sum_{m=1}^M\alpha_m=\sum_{n=1}^N\beta_n\,.
\ee
Furthermore, the weighted sums of $\vec x$ and $\vec y$ must be equal:
\be
\sum_{m=1}^M\alpha_mx_m=\sum_{n=1}^N\beta_ny_n\,.
\ee
The weighted majorization condition $(\vec x,\vec\alpha)\prec (\vec y,\vec\beta)$ is equivalent to the statement that, for any concave function $h$,
\be
\sum_{m=1}^M\alpha_mh(x_m)\ge\sum_{n=1}^N\beta_nh(y_n)\,.
\ee
Given a superbalanced information quantity $\QQ$ and central region $\XX$, we denote by $q_\I$ the coefficient of $\I$ in the null-reduced quantity $\ANR{\QQ}{\XX}$. Let $\vec\alpha$ be the vector whose components are the positive $q_\I$s, $\alpha_m=q_{\I_m}$, and $\vec\beta$ the vector whose components are minus the negative ones, $\beta_n=-q_{\I_n}$; balance of $\ANR{\QQ}{\XX}$ ensures that \eqref{equaltots} is satisfied. Slightly generalizing \eqref{xnyndef}, the vectors $\vec x$, $\vec y$ are now defined by
\be\label{xnyndef2}
x_m(\mb b):=\sum_{\YY_i\in\I_m}b_i\,,\qquad
y_n(\mb b):=\sum_{\YY_i\in\J_n}b_i\,.
\ee
We then have $\QQ(\vec{\mathsf{S}}_\psi)\ge0$ for all concave $h$ if and only if $(\vec x(\mb b),\vec\alpha)\prec (\vec y(\mb b),\vec\beta)$. We say that $\QQ$ passes the majorization test for $\XX$ if this holds for all positive values of the $b_i$, and that $\QQ$ is LCC-safe if it passes the majorization test for every elementary region. Tracing through the definitions, one sees that the condition of being LCC-safe is preserved by conical combinations and limits. Therefore the set of LCC-safe information quantities defines a closed convex cone in the dual entropy space, which we call the \emph{LCC cone}.

\section{Conjectures and evidence}\label{sec:results}

In this section we will present four conjectures concerning null reductions and the majorization test, and give strong empirical evidence in favor of each of them. For all four of them, we assume $\QQ$ is a superbalanced information quantity.

\begin{conjecture}\label{conj:sHIQmaj}
If  $\mathbf{Q}$ is an sHIQ, then it is LCC-safe.
\end{conjecture}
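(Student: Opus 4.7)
The plan is to reduce the majorization test to the sHIQ hypothesis by means of Karamata's theorem (Theorem \ref{thm:karamata}) together with a construction of suitable static holographic states. By Karamata, for each party $A$ the majorization condition $(y_1,\ldots,y_N)\succ(x_1,\ldots,x_N)$ (given $\sum_n x_n=\sum_n y_n$, which follows from Lemma \ref{thm:balance}) is equivalent to the concave-function inequality
\be
\sum_{n=1}^N h(x_n)\ge\sum_{n=1}^N h(y_n)
\ee
for every concave $h$. The goal then reduces to proving this concave-function inequality for every concave $h$ and every choice of positive $b_i$.

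I would attempt to realize this as an instance of the sHIQ hypothesis as follows. For each concave $h$ (which one can approximate by piecewise linear functions) and each choice of positive $b_i$, construct a static holographic state $\psi$ with $\mathsf{S}_\psi(\mathscr{I})=h(u_\mathscr{I})$, where $u_\mathscr{I}=\sum_{B_i\in\mathscr{I}} b_i$. Graph/network models, which realize the full holographic entropy cone \cite{Bao:2015bfa}, provide a natural framework: a piecewise linear concave $h(u)=\min_k(c_k u+d_k)$ is achievable as the min-cut of a bulk network with $K$ parallel channels, where each $B_i$ contributes $c_k b_i$ units of capacity through the $k$-th channel and an offset $d_k$ connects it to the purifier. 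Concavity of $h$ is precisely what guarantees that this prescription obeys SSA, which would otherwise obstruct it from arising as an RT entropy vector.

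Then sHIQ yields $\mathbf{Q}(\vec{\mathsf{S}}_\psi)\ge 0$. Decomposing $\mathbf{Q}=\ANR{\mathbf{Q}}{A}+\ANRC{\mathbf{Q}}{A}$, the null-reduced part evaluates to $\sum_n h(x_n)-\sum_n h(y_n)$ (the common $u_A$ shift drops out, since it appears in every term). If the non-central part $\ANRC{\mathbf{Q}}{A}(\vec{\mathsf{S}}_\psi)$ can be arranged to vanish or have a controlled sign, the concave-function inequality follows, and Karamata then delivers majorization.

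The hard part will be precisely handling $\ANRC{\mathbf{Q}}{A}$: on the constructed state its evaluation is $\sum_{A\notin\mathscr{J}}Q_\mathscr{J}\, h(u_\mathscr{J})$, which, although formed from a balanced sum of coefficients (Lemma \ref{thm:balance}), does not automatically vanish for nonlinear $h$. Candidate resolutions include (i) using purity to impose $\mathsf{S}_\psi(\mathscr{J})=h(B-u_\mathscr{J})$ with $B=\sum_i b_i$ on non-central regions and showing that the cross-cancellations succeed, (ii) exploiting the tripartite form $\mathbf{Q}=\sum_i-\mi_3(X_i:Y_i:Z_i|W_i)$ and arguing term-by-term along the lines of section \ref{sec:properties}, extending the case analysis of the null reduction of a single $-\mi_3(X:Y:Z|W)$ to a majorization statement for the whole sum, or (iii) bypassing the intermediate static-state construction altogether and instead directly proving first-order HRT validity for light-cone configurations under NEC-compatible bulk perturbations, generalizing Wall's maximin proof of MMI \cite{Wall_2014} to arbitrary sHIQs.
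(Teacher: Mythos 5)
First, a point of orientation: the paper does not prove this statement. It is stated as Conjecture \ref{conj:sHIQmaj} and explicitly left open (``it would be valuable to prove the two statements above about the majorization test''); the paper's support is purely empirical --- an analytic \emph{Mathematica} verification of the majorization property via the $\max(0,x-t)$ characterization \eqref{majmax} for all 1877 known $\N=6$ primitive orbits, plus fast numerical random-draw checks for the toric families up to $\N=13$. So your proposal attempts something strictly stronger than what the paper does, and must be judged as a standalone proof attempt rather than against a reference proof.

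As a proof attempt it has a genuine gap, which you partly flag yourself. The Karamata reduction (Theorem \ref{thm:karamata}) is fine and is exactly the mechanism the paper uses to link majorization to concavity. The problem is the middle step: you need a single \emph{static} geometric state whose entropy vector satisfies $\mathsf{S}(\mathscr{I})=h(u_{\mathscr{I}})$ on all $A$-containing composite regions \emph{and} on which $\ANRC{\mathbf{Q}}{A}$ vanishes (or has controlled sign), and neither half is established. For the first half, concavity of $h$ guaranteeing SSA is necessary but far from sufficient for membership in the RT cone --- that cone is strictly smaller than the SSA cone (this is the content of MMI and the higher inequalities) --- and your parallel-channel graph is not shown to yield $h(u_{\mathscr{I}})$ for \emph{unions} of parties rather than only for single parties; min-cuts of such networks do not automatically compose this way. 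For the second half, balance of $\ANRC{\mathbf{Q}}{A}$ (Lemma \ref{thm:balance}) kills that piece only if the entropies of the non-$A$-containing regions are additive, $\mathsf{S}(\mathscr{J})=\sum_{B_i\in\mathscr{J}}s_i$; in the light-cone setting this additivity comes from the regions being null and degenerate, and a static state tuned to produce $h(u_{\mathscr{I}})$ on the $A$-containing regions has no reason to also satisfy it. Indeed, if one could always realize both conditions simultaneously in a static RT configuration, Conjecture \ref{conj:sHIQmaj} would follow immediately, so the existence of such a state is essentially equivalent to the conjecture and cannot be assumed. Your fallback options (ii) and (iii) are directions the paper itself gestures at (the tripartite-form analysis of section \ref{sec:properties}, and a Wall-style maximin generalization), but neither is carried out there or here, so the statement remains unproved.
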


\begin{conjecture}\label{conj:majsHIQ}
If $\mathbf{Q}$ is LCC-safe, then it is an sHIQ.
\end{conjecture}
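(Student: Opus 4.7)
My plan is to prove the contrapositive: if $\mathbf{Q}$ fails to be an sHIQ, then it fails the majorization test for some elementary party $\XX$. The starting point is Karamata's theorem (Theorem~\ref{thm:karamata}), which rephrases the majorization test at $\XX$ as the assertion that $\sum_n h(x_n)\ge\sum_n h(y_n)$ for every concave $h\colon(0,\infty)\to\mathbb{R}$ and every choice of positive weights $b_i$, with $x_n,y_n$ as in \eqref{xnyndef}. The first step is to reinterpret this condition as an RT inequality on a restricted family of graph models: consider weighted graphs $G_{\XX,h,\vec b}$ in which each non-central party $\YY_i$ is attached by a single edge of weight $b_i$ to a bulk ``junction subgraph'' tuned so that the min-cut of any composite region $\I\supseteq\XX$ equals $h\!\bigl(\sum_{\YY_i\in\I}b_i\bigr)$. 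Any piecewise-linear concave $h$ can be realised this way by taking a disjoint union of several ``tent'' star-graph layers whose boundary edges are proportional to the common vector $\vec b$, and general concave $h$ follows by approximation. Hence $\mathbf{Q}$ passes the majorization test for $\XX$ if and only if $\mathbf{Q}(\vec{\mathsf{S}}(G_{\XX,h,\vec b}))\ge 0$ for all such $h,\vec b$. Conjecture~\ref{conj:sHIQmaj} amounts to the trivial inclusion of this family inside the RT cone, while Conjecture~\ref{conj:majsHIQ} is the surprising claim that this subfamily is already \emph{detecting} of sHIQ-ness among superbalanced quantities.

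The second step is to establish this detection property. The natural lever is the equivalence (noted informally in example~3) that a superbalanced $\mathbf{Q}$ is an sHIQ if and only if each of its null reductions $\ANR{\mathbf{Q}}{\XX}$ is. Taking this for granted, the task reduces to showing that $\mathbf{Q}$ passing the majorization test at $\XX$ forces $\ANR{\mathbf{Q}}{\XX}$ to be a valid RT inequality; iterating over all choices of $\XX$ then yields sHIQ-ness of $\mathbf{Q}$ itself. Since every term of $\ANR{\mathbf{Q}}{\XX}$ contains $\XX$, one can hope to dualise: interpret the concave-function inequality $\sum_n h(x_n)\ge\sum_n h(y_n)$ as producing, for every putative graph counterexample to $\ANR{\mathbf{Q}}{\XX}$, a concave profile $h$ and weights $\vec b$ that replay the violation on some $G_{\XX,h,\vec b}$. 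A parallel route is to translate the Karamata inequality directly into a contraction map $f\colon\{0,1\}^L\to\{0,1\}^R$ by feeding threshold functions $h(t)=\min(t,c)$ into the inequality to read off the required Hamming-norm-decreasing structure one threshold at a time. Small-$\mathsf{N}$ base cases can be checked directly against the complete lists of primitive sHIQs of \cite{Bao:2015bfa,HernandezCuenca:2019wgh}.

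The principal obstacle is precisely this detection step. Min-cuts of a general graph are emphatically not single-variable functions of the partial sums $\sum_{\YY_i\in\I}b_i$, so there is no \textit{a priori} structural reason why a high-dimensional graph-model violation of $\mathbf{Q}$ should descend to the one-parameter family $G_{\XX,h,\vec b}$. Superbalance is clearly essential --- the statement is false for Araki--Lieb-type quantities --- and I expect a successful proof will have to exploit it together with either the tripartite form \eqref{eq:tripartite-form} of the known primitive sHIQs, which automatically packages superbalance, or a refinement of the contraction-map construction that reads off a concave probe from any given contraction-map obstruction. Without such a structural lever, the conjecture resists a clean proof, consistent with the authors having to rely at this stage on the extensive numerical evidence reported in section~\ref{sec:results}.
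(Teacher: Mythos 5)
You have not proved the statement, and you should be aware that the paper does not prove it either: Conjecture~\ref{conj:majsHIQ} is left open, supported only by numerical evidence (testing the contrapositive on over a million randomly generated non-sHIQs of the form $n\,\pi(\mathbf{Q}_i)-\sigma(\mathbf{Q}_j)$ and observing that each fails the majorization test). Your proposal is a strategy sketch whose decisive step --- what you call the ``detection property'' --- is explicitly unresolved. Concretely, your argument has two load-bearing gaps. First, you invoke the equivalence ``$\mathbf{Q}$ is an sHIQ iff all its null reductions are,'' which is itself only Conjectures~\ref{conj:QisTrue} and \ref{conj:allNR} of the paper; using one open conjecture to prove another does not close the loop. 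Second, even granting that, you still need: majorization test at $\XX$ passes $\Rightarrow$ $\ANR{\mathbf{Q}}{\XX}$ is an sHIQ. Your graph-model reformulation reduces this to showing that any graph-model violation of $\ANR{\mathbf{Q}}{\XX}$ descends to the one-parameter family $G_{\XX,h,\vec b}$ in which all min-cuts of regions containing $\XX$ are a single concave function of the partial sums $\sum_{\YY_i\in\I}b_i$. As you yourself note, there is no structural reason for this descent, and you offer no argument for it. The proposal therefore identifies the difficulty but does not overcome it.

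That said, your first step is a genuinely useful reformulation that the paper does not state: because balance makes the two sides of \eqref{fundamental} have equal argument sums, Karamata lets you restrict to nondecreasing concave $h$ with $h(0)=0$, and such profiles are exactly realizable as min-cut functions of layered star graphs with boundary weights proportional to $\vec b$. This recasts the majorization test as membership of the entropy vectors of a very small subfamily of graph models in the half-space $\mathbf{Q}\ge 0$, which makes Conjecture~\ref{conj:sHIQmaj} nearly immediate and sharpens Conjecture~\ref{conj:majsHIQ} into the statement that this subfamily already spans (in the relevant dual sense) enough of the RT cone to detect non-membership. If you can prove the detection step --- perhaps by extracting a concave probe from a failed contraction map, as you suggest --- that would be a real contribution; as written, it is an open problem, not a proof.
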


Conjectures \ref{conj:sHIQmaj} and \ref{conj:majsHIQ} are the central claims of this paper. They tie majorization theory with objectives (1) and (2) listed in the introduction: conjecture \ref{conj:sHIQmaj} guarantees \emph{robustness} of sHEIs in the HRT regime, while conjecture  \ref{conj:majsHIQ} \emph{characterizes} sHEIs through majorization. Together, they amount to the claim that the LCC cone equals the RT cone (or, more precisely, its superbalanced sub-cone).

\begin{conjecture}\label{conj:QisTrue}
If $\mathbf{Q}$ is an sHIQ then all of its null reductions are sHIQs.
\end{conjecture}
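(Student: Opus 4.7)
The plan is to prove Conjecture 3 by lifting a contraction-map proof of $\QQ$ to one for $\ANR{\QQ}{\XX}$. Recall, as cited in Section 2, that every sHEI admits a contraction-map proof, so producing a contraction map $\tilde f$ for $\ANR{\QQ}{\XX}$ would establish it as an sHIQ. Writing $\ANR{\QQ}{\XX}$ in the form \eqref{InJndef}, its LHS and RHS are indexed by the sets $S_L=\{l:(x_\XX)_l=1\}$ and $S_R=\{r:(y_\XX)_r=1\}$ carved out of the LHS and RHS of $\QQ$ by the condition that the term contains the central party $\XX$. Given a contraction map $f:\{0,1\}^L \to \{0,1\}^R$ for $\QQ$, my first attempt would be to define $\tilde f:\{0,1\}^{S_L}\to\{0,1\}^{S_R}$ by zero-extending a bitstring $\tilde x$ on $S_L$ to $\hat x$ on $L$, applying $f$, and restricting the output to $S_R$. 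By construction $x_\XX$ is $1$ exactly on $S_L$ and $y_\XX$ is $1$ exactly on $S_R$, so this recipe correctly handles the central party: the zero-extension of $\tilde x_\XX$ (the all-ones vector on $S_L$) equals $x_\XX$ itself, and $f(x_\XX)|_{S_R}=y_\XX|_{S_R}$ is the all-ones vector on $S_R$, namely $\tilde y_\XX$.

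The main obstacle is that the recipe above fails to respect the occurrence vectors of the non-central parties: for a party $B_i\neq\XX$, the zero-extension of $\tilde x_{B_i}=x_{B_i}|_{S_L}$ differs from $x_{B_i}$ whenever $B_i$ appears in some LHS term of $\QQ$ that omits $\XX$, and the Lipschitz property of $f$ only bounds, rather than determines, $f(\hat x)$ in that case. To bypass this I would treat the search for $\tilde f$ as an abstract feasibility problem: find any map $\{0,1\}^{S_L}\to\{0,1\}^{S_R}$ obeying the $\N+1$ occurrence-vector constraints together with the weighted-Hamming Lipschitz property. A necessary numerical prerequisite, $\sum_{l\in S_L}c_l=\sum_{r\in S_R}d_r$, is guaranteed by balance of $\ANR{\QQ}{\XX}$ (Lemma \ref{thm:balance}); superbalance of $\QQ$ enforces analogous equalities for every pair $(\XX,B_i)$, which should further constrain the construction. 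The hardest step, I expect, is converting these balance conditions into an actual combinatorial construction of $\tilde f$, since existing proofs of contraction-map existence are quite configuration-specific and do not obviously localize to the sub-indices $S_L, S_R$.

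As a backup approach, I would invoke the tripartite-form decomposition $\QQ=\sum_i -\mi_3(X_i:Y_i:Z_i|W_i)$ discussed in Section 2.4. Under null reduction on $\XX$, the three cases analyzed there convert each term into either a non-negative CMI (case 1), leave it unchanged (case 2), or kill it (case 3). Since cases 1 and 2 preserve sign-definiteness in the correct direction, the entire difficulty reduces to controlling case 3: $\sum_i (-\mi_3(X_i:Y_i:Z_i|W_i))$ is non-negative as a whole, but individual summands need not be, so one cannot simply drop the case-3 terms. A promising attack is to use the fact that each case-1 term, after null reduction, is the original $-\mi_3(X_i:Y_i:Z_i|W_i)$ plus an additional non-negative piece $\mi(Y_i:Z_i|W_i)$, and to show that these extra pieces compensate for the lost case-3 contributions, perhaps by exploiting the purification symmetry of $\QQ$'s orbit to re-express case-3 terms in terms of case-1 ones relative to a different central party. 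The principal obstacle for either route is thus the same: a finer understanding of how terms of an sHIQ interlock, which at present seems to require genuinely new input.
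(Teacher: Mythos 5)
There is a genuine gap, but it is important to frame it correctly: the statement you are trying to prove is presented in the paper only as Conjecture \ref{conj:QisTrue}, and the paper itself offers no proof --- only the computational check that all null reductions of the 372 superbalanced $\N=5$ primitive sHIQs lie inside the $\N=5$ dual cone. So you should not expect to find a complete argument to match; what matters is whether either of your two routes actually closes, and neither does. In your first route, the restriction/zero-extension of a contraction map $f$ for $\QQ$ to the index sets $S_L$, $S_R$ fails for exactly the reason you identify: for a non-central party $B_i$ the zero-extension of $x_{B_i}|_{S_L}$ is generally not $x_{B_i}$, so the required condition $\tilde f(\tilde x_{B_i})=\tilde y_{B_i}$ is not guaranteed, and the Lipschitz property of $f$ on the subcube where the $\XX$-coordinates are held at $1$ does not by itself produce a map on $\{0,1\}^{S_L}$ with the right boundary data. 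Your fallback --- ``find any map obeying the constraints'' --- is a restatement of the problem, not a construction; the balance and superbalance identities you cite are necessary numerical conditions but nowhere near sufficient for the existence of a contraction map.

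Your second route reproduces, essentially verbatim, the paper's own incomplete analysis in the subsection on null reduction and the tripartite form, and stalls at the same place the authors do: case-3 terms (those with $\XX\nsubseteq X_iY_iZ_iW_i$) are annihilated by the null reduction, but the individual conditional tripartite informations are not sign-definite, so dropping them can in principle decrease the quantity, and no mechanism is exhibited by which the non-negative CMI bonuses from case-1 terms compensate. The paper explicitly flags this as an open problem left ``for future explorations.'' So your proposal correctly diagnoses the obstructions and is aligned with the directions the authors themselves consider, but it does not establish the conjecture; to be a valid submission it would need to either complete one of the two constructions or be honestly labeled as a proof strategy rather than a proof.
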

\begin{conjecture}\label{conj:allNR}
If all of the null reductions of $\QQ$ are sHIQs, then it is an sHIQ.
\end{conjecture}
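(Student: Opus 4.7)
The plan is to reduce Conjecture~\ref{conj:allNR} to Conjecture~\ref{conj:majsHIQ}. Under the hypothesis that every null reduction $\ANR{\QQ}{\XX}$ is an sHIQ, I would attempt to show that $\QQ$ itself passes the majorization test; Conjecture~\ref{conj:majsHIQ} would then deliver the sHIQ property of $\QQ$. By Theorem~\ref{thm:karamata}, together with the balance of $\ANR{\QQ}{\XX}$ (Lemma~\ref{thm:balance}), the required majorization $(y_1,\ldots,y_N)\succ(x_1,\ldots,x_N)$ for all positive $b_i$ is equivalent to $\sum_n h(x_n)\geq\sum_n h(y_n)$ holding for every concave function $h$. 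The key step would be to realize each such Karamata inequality as the statement ``$\ANR{\QQ}{\XX}(\vec{\mathsf{S}})\geq 0$'' for a carefully chosen RT entropy vector $\vec{\mathsf{S}}$ with $\mathsf{S}(\I)\propto h\bigl(\sum_{\YY_i\in\I}b_i\bigr)$ for each $\I$ containing $\XX$; the hypothesis would then deliver the required inequality directly for each choice of central party $\XX$.

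The main obstacle is engineering RT entropy vectors with arbitrary concave profiles. The HRT perturbative construction of Section~\ref{sec:bulk-perturbations} naturally produces entropies of precisely this form through $\int d\tilde\Omega\,h(u_\I(\tilde\Omega))/4G_{\rm N}$, but in a manifestly non-static setting; it is not obvious that the static RT cone admits the same flexibility. A promising route would be via the graph-model realization of the RT cone~\cite{Bao:2015bfa}, where each RT entropy vector arises as a min-cut on a weighted graph: one could attempt to construct graph families whose min-cuts reproduce or approximate any given concave profile evaluated on the subset sums $\sum_{\YY_i\in\I}b_i$. Avoiding circularity with Conjecture~\ref{conj:majsHIQ} is delicate here, since the HRT perturbative picture is already in the orbit of the majorization test, so one must supply genuinely static constructions.

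As an alternative, more combinatorial route, I would attempt a direct proof via the contraction-map characterization~\cite{Bao:2015bfa,Bao:2025sjn}. Given a contraction map $f_\XX$ for each null reduction (provided by the sHIQ assumption), one would try to assemble them into a single contraction map for $\QQ$, handling the $\ANRC{\QQ}{\XX}$ terms via purification symmetry: the $\XX\leftrightarrow O$ swap sends them to terms containing $\XX$ in the resulting orbit element, whose null-reduction contraction maps can then be invoked. The central difficulty in either approach is the same: producing a globally coherent object --- an RT state, or a contraction map --- out of only locally available pieces (an sHIQ property of each slice $\ANR{\QQ}{\XX}$, or a contraction map for each). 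Different $f_\XX$'s typically assign the same LHS bit string of $\QQ$ to different RHS images, and a coherent global choice may not exist without further structural input on how an sHIQ decomposes under null reduction.
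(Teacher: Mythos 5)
You should first be aware that the paper does not prove this statement: it is presented as Conjecture \ref{conj:allNR} and is supported only by numerical evidence, namely testing the contrapositive on hundreds of randomly generated non-sHIQs at $\N=5$ and checking that at least one null reduction falls outside the known dual cone. So there is no proof in the paper against which to match your argument; any complete proof would be new content.

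That said, your proposal is not a proof either, and the gaps are substantive rather than cosmetic. Your primary route reduces Conjecture \ref{conj:allNR} to Conjecture \ref{conj:majsHIQ}, which is itself unproven, so at best you would obtain a conditional implication between two open conjectures. Even that reduction is incomplete at its key step: to convert ``$\ANR{\QQ}{\XX}$ is an sHIQ'' into the Karamata inequalities $\sum_n h(x_n)\ge\sum_n h(y_n)$ for all concave $h$, you need entropy vectors with $\mathsf{S}(\I)\propto h\bigl(\sum_{\YY_i\in\I}b_i\bigr)$ that lie in the \emph{RT} cone, since an sHIQ is only guaranteed non-negative on RT-realizable vectors. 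The light-cone construction of section \ref{sec:bulk-perturbations} produces exactly such profiles, but only as HRT entropies of a non-static perturbed geometry, which cannot be used to certify $\ANR{\QQ}{\XX}(\vec{\mathsf{S}})\ge0$ without assuming the very HRT-validity the paper is trying to establish; you flag this circularity but do not resolve it, and the proposed graph-model realization of arbitrary concave profiles on the $\XX$-containing subset sums is asserted as ``promising'' rather than carried out. Your alternative contraction-map route faces exactly the gluing obstruction you identify (different maps $f_\XX$ assigning inconsistent images to the same LHS bit string), and it is worth noting that this is evidently the direction the authors themselves explored --- the draft material appended after the bibliography attempts to reassemble the contraction map for a six-party inequality from its five null-reduction contraction maps and breaks off with ``to be continued'' --- so the difficulty is real and unresolved. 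In short: the statement remains a conjecture; your sketch correctly identifies plausible attack routes and their obstructions, but closes none of them.
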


These two conjectures represent some kind of self-consistency condition on the dual entropy cone.

\subsection{Computational evidence} 

\paragraph{Conjecture \ref{conj:sHIQmaj}:} To test this conjecture, it is sufficient to test it on the primitive sHIQs. We tested a large number of known primitive sHIQs, finding that they all passed the majorization test. Two strategies were employed: analytic and numerical. The analytic strategy consisted of showing that the inequality \eqref{majmax} holds for arbitrary $t$ and arbitrary positive $b_i$. As an example, for MMI null-reduced on the region $A$, we wish to show that
\be
(b,c)\prec(b+c,0)\,.
\ee
If $b\le c$, we have three cases: $t\in (-\infty,b]$, $t\in[b,c]$, $t\in[c,\infty)$. In each case, \eqref{majmax} is easily confirmed.\footnote{Note that, in this simple case, majorization is also immediately evident from the definition in \eqref{majdef}.} This method quickly becomes tedious to do by hand, but can be implemented in \emph{Mathematica} using \texttt{FullSimplify}; see appendix \ref{sec:code} for details. The analytic method was applied to the 1876 known (non-MI) $\N=6$ primitive sHIQs (but not their images under permutations involving the purifier, which were tested using the numerical method below due to computational speed).

For large inequalities, this analytic method becomes too slow. We can instead carry out the majorization test numerically, by directly implementing the definition \eqref{majdef} for random values of the $b_i$; again, see appendix \ref{sec:code} for details. This method can find counterexamples but cannot prove that the test is passed. However, finding no counterexample after many random draws constitutes convincing evidence. Furthermore, the test is extremely fast. For example, testing all 1876 known $\N=6$ primitive sHIQs \cite{Hernandez-Cuenca:2023iqh} with 100 random draws for each null reduction took only about a minute on a standard laptop, versus several days for the analytic method. The numerical method was applied to the toric and projective plane inequalities \cite{Czech:2024rco} up to $\N=13$ and for the 11253 images involving the purifier of the 1876 known $\mathsf{N} = 6$ primitive sHIQs.

The analytic and numerical results taken together constitute overwhelming evidence in favor of conjecture \ref{conj:sHIQmaj}.

\paragraph{Conjecture \ref{conj:majsHIQ}:} This conjecture is supported by testing its contrapositive: a superbalanced information quantity that is not an sHIQ is not LCC-safe. To construct non-sHIQs, we subtracted superbalanced primitive sHIQs from each other. Specifically, we wrote code to do the following:
\begin{enumerate}
    \item Pick a random pair of primitive sHIQs $\mathbf{Q}_i$ and $\mathbf{Q}_j$.
    \item Permute the $\N$ parties plus the purifier in $\QQ_i$, $\QQ_j$ independently with randomly drawn permutations $\pi,\sigma \in S_{\N+1}$ to obtain new quantities $\pi(\mathbf{Q}_i)$ and $\sigma(\mathbf{Q}_j)$. This is done to democratically test the conjecture on all corners of dual entropy space, avoiding potential selection biases. 
    \item Pick a random integer $n \in [4,10]$ and construct the quantity
    \begin{equation}
        \mathbf{Q}_{\text{false}}: = n \pi(\mathbf{Q}_i) - \sigma(\mathbf{Q}_j).
    \end{equation}
\end{enumerate}
Then, we perform the majorization test on $\mathbf{Q}_{\text{false}}$ using the random numerical method described above. Using the known $\N=6$ primitive sHIQs, we performed more than one million random draws, in every case seeing a failure of the majorization test for at least one null reduction. (In practice, the most common occurrence was for the information quantity to fail the majorization test for all null reductions, but cases where some were passed were also not uncommon.)

We also performed a different exploration of the conjecture by exploiting the simple structure of information quantities written in ``tripartite form'' (see \ref{eq:tripartite-form}), which are guaranteed to be superbalanced. By restricting to IQs whose tripartite form contains \emph{only} conditional tripartite informations, one is guaranteed to obtain a \emph{false} superbalanced inequality. To systematically generate information quantities of this type one can do the following procedure. For a given conditional tripartite information, generate all arguments $\{X, Y, Z | W\}$ from a collection of $\mathsf{N}$ regions. This can be done by considering all partitions of $\{1,2,3,\dots, \mathsf{N}+1\}$ into five non-empty subsets, and then picking the four subsets not involving the purifier. Since the fourth argument $W$ is not symmetric with respect to the others, there are four additional expression that can be created. For $\mathsf{N} = 6$, there are 560 possible conditional tripartite informations. We exhaustively checked \emph{all} IQs whose tripartite form contains one or two conditional tripartite informations, i.e.\ information quantities of the form
\be
-\mi_3(X:Y:Z|W)\,, \qquad -\mi_3(X_1:Y_1:Z_1|W_1) -\mi_3(X_2:Y_2:Z_2|W_2) 
\ee
always seeing the failure of the majorization test for at least one null reduction.

\paragraph{Conjecture \ref{conj:QisTrue}:}
This conjecture was tested by showing that all null reductions of all 372 superbalanced $\N=5$ sHIQs are sHIQs by directly checking that each null reduction sits inside the dual cone. See appendix \ref{sec:code} for some details. We will further demonstrate \cref{conj:QisTrue} for all known sHIQs in \cref{sec:TFproperties} using the fact that they can be written in the tripartite form.

\paragraph{Conjecture \ref{conj:allNR}:} Similarly to conjecture \ref{conj:majsHIQ}, this conjecture is supported by testing its contrapositive: constructing a false superbalanced information quantity $\QQ_{\text{false}}$ and checking that at least one of its null reductions is \emph{not} an sHIQ. Since the HEC is known completely only up to $\mathsf{N} = 5$, we tested the conjecture in this regime. First, we construct $\QQ_{\text{false}}$ in the same way as in conjecture 2. Then, we check whether all of its null reductions sit inside the $\mathsf{N} = 5$ dual cone, by checking whether they can be written as conical combinations of the 372 primitive sHIQs. We tested this for hundreds of randomly generated examples of $\QQ_{\text{false}}$, finding no counterexamples.

\subsection{Tripartite form \& null reduction}
\label{sec:TFproperties}

We can analyze the structure of null-reduced information quantities by utilizing the so-called tripartite form of the sHIQ briefly mentioned in \eqref{eq:tripartite-form}.  Let us first recall the definition.  We say an information quantity $\QQ$ can be written in tripartite form (TF), or equivalently is \emph{TF-compatible}, if it can be expressed as a positive sum of negative tripartite and conditional tripartite informations,
\begin{equation}\label{eq:tripartite-form-def}
    \QQ = \sum_i - \mi_3(X_i:Y_i:Z_i|W_i) \, ,
\end{equation}
where $X_i, Y_i, Z_i$, and $W_i$ are any disjoint subsystems (individual or composite). The special case of $W_i=\emptyset$ reduces to non-conditional tripartite information, $- \mi_3(X_i:Y_i:Z_i)$.
Notice the coefficient of  $- \mi_3(X_i:Y_i:Z_i|W_i)$ is always +1; other positive integer coefficients can be attained simply by repeating terms, whereas negative coefficients would \emph{not} constitute a TF. As an empirical observation, all known superbalanced primitive sHIQs are TF-compatible, leading to the following conjecture \cite{HubenyLiuWIP}:
\begin{conjecture}\label{conj:sHEIisTF}
All superbalanced primitive sHIQs are TF-compatible.
\end{conjecture}
An immediate corollary of \cref{conj:sHEIisTF} is that any superbalanced sHIQ is a conical combination of $-\mi_3$s. Notice that the converse of conjecture \ref{conj:sHEIisTF} is clearly not true, a counterexample being $- \mi_3(X:Y:Z|W)$ itself, as exemplified by two holographic graph models \cite{Hernandez-Cuenca:2023iqh}.

\subsubsection{Effect of null reduction}

The TF of an information quantity $\QQ$ is particularly well-suited to the present context, since we can analyze what happens to $\QQ$ under null reductions by considering what happens to each term individually. Suppose for concreteness that we are null-reducing on the party $A$, and let us consider a single term in the TF. Since $\mi_3(X:Y:Z|W)$ is symmetric under permuting $X,Y,Z$, there are three distinct possibilites for which subsystem can contain the central region $\XX$:
\begin{enumerate}
    \item $\XX$ is contained in one of $X$, $Y$, or $Z$: Suppose (without loss of generality) that $\XX\subseteq X$.  Then 
\begin{align}\label{eq:nlr-firstcase}
\begin{split}
    -\ANR{\mi_3(X:Y:Z|W)}{\XX} &= -\ANR{\mi_3(WX:Y:Z)}{\XX} + \ANR{\mi_3(W:Y:Z)}{\XX}
    \\
    &= - \ANR{\mi_3(WX:Y:Z)}{\XX}\\
    &= \mi(Y:Z|WX)
\end{split}
\end{align}
where in the first line we rewrote the conditional tripartite information as a difference of two tripartite information quantities as in  \eqref{eq:cond-tripartiteI}, in the second line we dropped the second term in which $X$, and therefore $\XX$, is absent, and in the third line we used \eqref{MMINRA} to actually perform  the null reduction.  Similarly if $\XX\subseteq Y$, then $-\ANR{\mi_3(X:Y:Z|W)}{\XX}=\mi(X:Z|WY)$, and if $\XX\subseteq Z$, then $-\ANR{\mi_3(X:Y:Z|W)}{\XX}=\mi(X:Y|WZ)$.  In all cases, the subsystem containing $\XX$ joins the conditioned-on argument in the CMI. 
    \item $\XX$ is contained in $W$: Then \begin{align}\label{eq:nlr-secondcase}
\begin{split}
    -\ANR{\mi_3(X:Y:Z|W)}{\XX} &= -\ANR{(\mi(Y:Z|WX) - \mi(Y:Z|W))}{\XX}\\
    &= -\mi_3(X:Y:Z|W)\,,
\end{split}
\end{align}
so we simply get back the original conditional tripartite information. Here the first line comes from conditioning the following identity on $W$,
\begin{equation}
    -\mi_3(X:Y:Z) = \mi(Y:Z|X) - \mi(Y:Z)\,,
\end{equation}
and the second line follows from noticing that all terms contain $W$ so the null reduction operator acts as the identity. 

    \item $\XX$ is not contained in any of $X$,$Y$,$Z$,$W$: The null reduction kills the entire term,
\be\label{eq:nlr-thirdcase}
-\ANR{\mi_3(X:Y:Z|W)}{\XX} = 0\,.
\ee
\end{enumerate}

Since we are interested in the effect of the null reduction on the entire $\QQ$, it is natural to compare $\ANR{\QQ}{\XX}$, when evaluated on an entropy vector in the RT cone, both with $\QQ$ and with 0. The above results tell us what happens to each term individually. In the first case ($\XX \subseteq X$), we can re-express $-\ANR{\mi_3(X:Y:Z|W)}{\XX}$ more conveniently by recognizing that the CMI generated by null reduction \eqref{eq:nlr-firstcase} is in fact the original conditional tripartite information term plus another CMI, 
\begin{equation}
    \mi(Y:Z|WX)= -\mi_3(X:Y:Z|W) + \mi(Y:Z|W)\,.
\end{equation}
Hence, by SSA, 
this null reduction can only increase the value of the term, and is simultaneously guaranteed to be non-negative.
In the second case  ($\XX \subseteq W$), the null reduction \eqref{eq:nlr-secondcase} preserves the value, which itself is however not sign-definite. In the third case  ($\XX \nsubseteq XYZW$),  the null reduction \eqref{eq:nlr-thirdcase} is itself non-negative (since it's simply 0), but now it is no longer clear whether this operation increases or decreases the value of the original term.

Hence, for a generic information quantity written in TF, any given null reduction is \emph{not} guaranteed to be either positive or greater than $\QQ$ itself. Nonetheless, as we will now discuss, if more information is available about the information quantity, then powerful consequences can be derived from the TF.

\subsubsection{Restricted tripartite form}

Suppose now that the entire TF expression for $\QQ$ has the property that none of the $W_i$'s contain $\XX$. Then we are guaranteed that $\ANR{\QQ}{\XX}$ is just a sum of CMIs:
\begin{equation}\label{eq:NLRforRTF}
    \ANR{\QQ}{\XX} = \sum_j \mi(\tilde{X}_j:\tilde{Y}_j|\tilde{W}_j) \ge0\, .
\end{equation}
This automatically renders it both an sHIQ and majorizable. The former follows by SSA, while the latter follows from the fact that the majorization test is preserved by conical combinations and the fact that each term in \eqref{eq:NLRforRTF} individually majorizes, cf.\ \eqref{MMImajorize}. Another nice possibility is that $\XX \subset X_iY_iZ_iW_i$ for each $i$, i.e.\ all terms fall into the first two categories.  Then we can rewrite $\ANR{\QQ}{\XX}$ as the sum of the original $\QQ$, along with CMIs, which by SSA means that $\ANR{\QQ}{\XX}(\vec{\mathsf{S}}) \ge \QQ(\vec{\mathsf{S}})$ for any holographic (or even quantum) entropy vector, in other words that  $-\ANRC{\QQ}{\XX} = \ANR{\QQ}{\XX} - \QQ$ is an sHIQ (independently of whether or not $\QQ$ is).  If additionally $\QQ$ is in fact an sHIQ, the null reduction $ \ANR{\QQ}{\XX} $ will likewise be one. The sHIQ $\QQ^{[5]}$ studied in Example \ref{Q5eg} has this property.

The previous paragraph made assumptions about the collection of terms in the TF.  When considering a given primitive sHIQ, such as specified for $\N=6$ in \cite{Hernandez-Cuenca:2023iqh}, it is often  the case (scanning over all $\QQ$'s) that each specific party $X$ is absent from some term $i$ in the TF, the conditioned-on parties $\{W_i\}$ are distinct (the number of these ranges from 0 to 5), and so there are also some parties which are not conditioned on by any term.  However, this by itself does not manifest LCC-safety.  To do better, we would need to show that we can re-express $\QQ$ in different versions of TF, such that for each party, we can find a version where that party is not conditioned-on by any term.    

To motivate this possibility, note that while TF-compatibility certainly poses a restriction on an information quantity, for a given TF-compatible $\QQ$, the form \eqref{eq:tripartite-form-def} is far from unique (with the exception of a sum of non-conditioned $\mi_3$'s with elementary subsystem arguments).  This is evident from the fact that the number of distinct terms is exponentially (in $\N$) larger than the dimensionality of the entropy space.  For example, by rewriting \eqref{eq:cond-tripartiteI}, we have three natural TF expressions:
\begin{equation}\label{eq:rewriteTF}
    \mi_3(WX:Y:Z) 
        = \mi_3(W:Y:Z) +\mi_3(X:Y:Z|W) 
        =  \mi_3(X:Y:Z) +\mi_3(W:Y:Z|X)  \, ,
\end{equation}
and if any of the arguments are composite, we have many more. We can then use these types of manipulations to recast a given $\QQ$ from one TF to another, to shift which parties are conditioned on; for example in \eqref{eq:rewriteTF} the last equality shifted the conditioned-on party from $W$ to $X$.

We say an information quantity $\QQ$ can be written in a \emph{restricted tripartite form} or is \emph{RTF-compatible} if, for any choice of a specific party $V$,  it can be written in a tripartite form \eqref{eq:tripartite-form-def}, such that none of the $W_i$'s contain $V$. The utility of RTF-compatibility stems from the fact that any RTF-compatible sHIQ is LCC-safe, and all its null reductions are sHIQ. If RTF-compatibility held for all sHIQs, this would immediately prove both \cref{conj:sHIQmaj} and \cref{conj:QisTrue}.  Moreover, it would give a useful hint regarding a potentially more insightful repackaging of sHEIs.  However, it is \emph{not} the case that all sHIQs have this property, as we saw in the case of $\mathbf{Q}^{[7]}$ in Example \ref{Q7eg}.

\subsubsection{Conjecture \ref{conj:sHEIisTF} implies conjecture \ref{conj:QisTrue}}

Let us now consider the consequences of TF-com\-pat\-i\-bil\-i\-ty upheld for all sHIQs. If we assume \cref{conj:sHEIisTF}, and additionally assume contraction map completeness (namely that if $\QQ$ is an sHIQ then it has a contraction map), as argued in \cite{Bao:2025sjn}, we can actually prove \cref{conj:QisTrue}.  This subsubsection will be devoted to the proof.

\begin{thm}
Let $\QQ$ be a contraction-provable and TF-compatible sHIQ. Then all of its null reductions are sHIQs.
\end{thm}

\begin{proof}
Let $f$ be the contraction map for $\QQ$, characterized by \eqref{eq:contrOV} and \eqref{eq:contrcond}.  It can be expressed as a table with $L$ LHS columns and $R$ RHS columns, with LHS rows comprising all possible $\{0,1\}^L$ bit strings, and RHS rows with corresponding  $\{0,1\}^R$ bit strings such that any pair has no greater Hamming distance than corresponding LHS pair.\footnote{
    For convenience we assume all coefficients are 1, which can be achieved by writing out the terms with higher coefficients multiple times.} 
The occurrence vector pairs have equal Hamming distance on both sides.

Consider any party, say $\XX$, and the corresponding null reduction $\ANR{\QQ}{\XX}$.  We now propose a contraction map for $\ANR{\QQ}{\XX}$ as follows: 
\begin{enumerate} 
\item Drop LHS and RHS columns corresponding to $\ANRC{\QQ}{\XX}$, i.e.\ all terms which don't contain $\XX$.  This creates a degeneracy amongst the LHS bit strings, which we fix as follows:
\item Keep all occurrence vectors, and drop any other row whose LHS bit string was degenerate with any of these.  
\item From the remainder, delete all rows which had non-zero entries in the deleted LHS columns.  In other words, apart from the occurrence vectors, the other LHS bit strings are those which had 0's in the deleted columns.  
\end{enumerate} 
We call this new map $\fnr{\XX}$. 
It now has $N$ LHS and $N$ RHS columns, and $2^N$ rows corresponding to all LHS bit strings $\vec x \in \{0,1\}^N$.
To show that $\fnr{\XX}$ is indeed a contraction map, we need to show that for any pair of bit strings  $\vec x,\vec x' \in \{0,1\}^N$, the RHS Hamming distance is no larger than the LHS one, 
$\norm{\fnr{\XX}(\vec x) - \fnr{\XX}(\vec x')}_{\text{rhs}} \leq \norm{\vec x-\vec x'}_{\text{lhs}}$.

There are three types of possibilities:
\begin{enumerate} 
\item $\vec x$ and $\vec x'$ are both occurrence vectors
\item only one of $\vec x$ and $\vec x'$ is an occurrence vector
\item neither $\vec x$ nor $\vec x'$ are occurrence vectors.
\end{enumerate} 
In the first case, suppose we take the occurrence vectors corresponding to elementary subsystems $X$  and $Y$.  The Hamming distance is given by the number of terms which have $X$ but not $Y$, or those which have $Y$ but not $X$ (since otherwise if they have both $X$ and  $Y$, we get $1-1=0$ contribution, and if they have neither, we likewise get $0-0=0$ contribution). Let us denote this by $\#_{X\wedge Y}$.  As a warm-up, let us first see how to recover the fact that this number is the same between LHS and RHS for the original $\QQ$, using the tripartite form expression.  We can do this term by term.  Since we're assuming $\QQ$ is TF-compatible, it suffices to consider each $-\mi_3(X_i:Y_i:Z_i|W_i)$ term individually, and then combine them.  For concreteness, suppose one of these terms is 
\begin{multline}
-\mi_3(A:B:C|D) = \\
\ent{D} + \ent{ABD} + \ent{ACD} + \ent{BCD} - \ent{AD} - \ent{BD} - \ent{CD} - \ent{ABCD}  \,.
\end{multline}
The distinct types of possibilities for $X$ and $Y$ are (up to $\{A,B,C\}$ permutations or $\{E,F,\ldots\}$ permutations) $A \wedge B$, $A \wedge D$, $A \wedge E$, and $D \wedge E$.  It is trivial to check explicitly that for each of these possibilities 
\begin{equation}
 \Delta \#_{X\wedge Y} \coloneq
    \#_{X\wedge Y}(LHS) - \#_{X\wedge Y}(RHS) = 0 \, .
\end{equation}
The same result will then hold for each $-\mi_3(X_i:Y_i:Z_i|W_i)$ and therefore their sum $\QQ$.
Now consider the null reduction $\ANR{\QQ}{\XX}$.  For each term $-\mi_3(X_i:Y_i:Z_i|W_i)$, the null reduction either gives conditional mutual information (if $\XX \subset X_iY_iZ_i$), remains the same (if $\XX \subseteq W_i$), or else gives zero.  In the latter two cases we still have $\Delta \#_{X\wedge Y}= 0$.  In the first case, say (for concreteness) we consider $\mi(B:C|AD)$.  Then  $\Delta \#_{B\wedge C} = 2$, while all the other possibilities vanish. The upshot is that for any $X$ and $Y$, and any null-reduced TF-compatible sHIQ, $\Delta \#_{X\wedge Y} \ge 0$.  This means that the map $\fnr{\XX}$ is a contraction on the occurrence vectors.

To analyze the other two types of cases, we first make an observation: In the original contraction map $f$, all rows with LHS bit strings having 0's for all terms in $\ANRC{\QQ}{\XX}$ necessarily have the RHS terms corresponding to $\ANRC{\QQ}{\XX}$ likewise all 0's. To see this, consider first two specific occurrence vectors: $\vec{x}_{\XX}$, for which $f$ has 1's in all $\ANR{\QQ}{\XX}$ columns and 0's in all $\ANRC{\QQ}{\XX}$ columns, and $\vec{x}_{\N+1}$, which has 0's in all columns.  Now consider any other occurrence vector $\vec{x}$ with 0's in all $\ANRC{\QQ}{\XX}$ columns.  Suppose it has  $n_0$ 0's and $n_1$ 1's in the $\ANR{\QQ}{\XX}$ columns.  Since $f$ is a contraction map, we know that the corresponding RHS bit string can have no more than $n_1$ 1's (since $\norm{f(\vec x) - f(\vec{x}_{\N+1})}_{\text{rhs}} \leq \norm{\vec x-\vec{x}_{\N+1}}_{\text{lhs}} = n_1$) and no more than $n_0$ 0's (since $\norm{f(\vec x) - f(\vec{x}_{\XX})}_{\text{rhs}} \leq \norm{\vec x-\vec{x}_{\XX}}_{\text{lhs}} = n_0$) already in the $\ANR{\QQ}{\XX}$ columns.  Since $n_0+n_1=N$, which is the number of both LHS and RHS  $\ANR{\QQ}{\XX}$ columns, and contraction is already saturated on these, in order for $f$ to remain a contraction map, we need  $\norm{f(\vec x) - f(\vec{x}_{\XX})}_{\text{rhs}} = \norm{f(\vec x) - f(\vec{x}_{\N+1})}_{\text{rhs}} = 0$ on all $\ANRC{\QQ}{\XX}$ columns, so all entries in $f(\vec{x})$ must be 0 there.

It is now easy to see that $\fnr{\XX}$ is a contraction map for cases 2 and 3 as well.  In case 3, the Hamming distance remains the same on both LHS and RHS, i.e.\  $\norm{\fnr{\XX}(\vec x) - \fnr{\XX}(\vec x')}_{\text{rhs}} = \norm{f(\vec x) - f(\vec x')}_{\text{rhs}}  \leq \norm{\vec x-\vec x'}_{\text{lhs}}$ (since all deleted columns had 0's for both $\vec{x}$ and $\vec{x}'$ rows).  In case 2, the Hamming distance in the $\ANRC{\QQ}{\XX}$ columns is given by the number of 1's in the occurrence vector, which by balance of $\ANRC{\QQ}{\XX}$ is the same on both LHS and RHS.  Since the original $f$ was contraction map and the Hamming distance decreases by the same amount on both sides, $\fnr{\XX}$ remain a contraction map.

We have now proved that in all cases, $\fnr{\XX}$ is a contraction map.  This guarantees that $\ANR{\QQ}{\XX}$ is an sHIQ.  Since the choice of $\XX$ was arbitrary, repeating the argument for all regions, we see that all null reductions of $\QQ$ are sHIQs.
\end{proof}

\acknowledgments

We would like to thank Bartek Czech, Brianna Grado-White, Sarah Harrison, Sergio Hern\'an\-dez-Cuenca, Sami Kaya, Massimiliano Rota, and Fabian Ruehle for useful conversations. G.G. and M.H. were supported by the Department of Energy through award DE-SC0009986. V.H. was supported in part by the Department of Energy award DE-SC0009999 and by funds from the University of California. This research was supported in part by grant NSF PHY-2309135 to the Kavli Institute for Theoretical Physics (KITP), where part of this work was completed. This work was also performed in part at the Aspen Center for Physics, which is supported by National Science Foundation grant PHY-2210452. We are also grateful to the Centro de Ciencias de Benasque Pedro Pascual, where part of this work was completed.

\appendix
\section{Review of majorization}
\label{sec:review-majorization}

Consider two $N$ dimensional real vectors $\vec{x}$ and $\vec{y}$ ordered component-wise, i.e.\ such that $x_1 \geq x_2 \geq \dots \geq x_N$ and $y_1 \geq y_2 \geq \dots \geq y_N$. If the sum of the first $k$ components of $\vec{x}$ is less than or equal to the sum of the first $k$ components of $\vec{y}$ for all $k = 1, \dots, N-1$ and if the sums are equal for $k = N$, then we say that the vector $\vec{x}$ is majorized by $\vec{y}$ (in notation $\vec{x} \prec \vec{y}$). More compactly, $\vec x\prec\vec y$ if
\begin{equation}\label{majdef}
 \sum_{n = 1}^{k} x_n\le \sum_{n = 1}^{k} y_n 
  \quad( k = 1,2,\dots, N-1)\,,\qquad
  \sum_{n = 1}^{N} x_n=\sum_{n = 1}^{N} y_n \,.
\end{equation}
Majorization comes with an ordering of the components; if the two vectors are initially not ordered then one vector is majorized by the other if it is majorized once the components of both vectors are ordered. Intuitively, majorization is a statement about how much more ``spread out" the components of a vector are with respect to those of another vector. For example, the vector $\vec{x} = (3,4)$ is majorized by the vector $\vec{y} =(2,5)$, since once we order their components we have $4\le 5$ and  $4+3 = 5+2$.  

There are many equivalent definitions of majorization. One is through doubly stochastic matrices: $\vec{x}$ is majorized by $\vec{y}$ if and only if $\vec{x}$ is in the convex hull of all the vectors obtained from permuting the components of $\vec{y}$. This is expressed formally as the existence of a doubly stochastic matrix $\mathsf{D}$ (i.e.\ a square matrix with non-negative entries whose rows and columns all sum to 1) such that
\begin{equation}
    \vec{x} = \mathsf{D}\vec{y}.
\end{equation}

Another equivalent definition is through inequalities of concave (or convex) functions: $\vec{x}$ is majorized by $\vec{y}$ if and only if, for any concave function $\wdef: \mathbb{R} \to \mathbb{R}$, one has
\begin{equation}
    \sum_{n = 1}^{N} \wdef(x_n) \geq \sum_{n = 1}^{N} \wdef(y_n)\,.
\end{equation}
The inequality is reversed for convex functions, since if $\wdef$ is concave then $-\wdef$ is convex. This definition of majorization is extremely powerful, since it means that, to guarantee that $\vec{x} \prec \vec{y}$, it suffices to check that (i) $\sum_n x_n = \sum_n y_n$, and (ii) for all $t\in\R$,
\begin{equation}\label{majmax}
    \sum_{n=1}^{N} \max(0, x_n - t) \leq \sum_{n=1}^{N} \max(0, y_n- t)\,.
\end{equation}
This is the definition we use to implement the majorization test for our holographic entropy inequalities.

We now discuss weighted majorization. Let $\vec\alpha$, $\vec x$ be vectors of length $M$ and $\vec\beta$, $\vec y$ be vectors of length $N$, such that $\vec\alpha$, $\vec\beta$ have positive components and
\be
\sum_{m=1}^M\alpha_m=\sum_{n=1}^N\beta_n\,,\qquad
\sum_{m=1}^M\alpha_mx_m=\sum_{n=1}^N\beta_ny_n\,.
\ee
We say $(\vec x,\vec\alpha)\prec (\vec y,\vec\beta)$ if, for all $t\in\R$,
\be
\sum_{m=1}^M\alpha_m\max(0,x_m-t)\le
\sum_{n=1}^N\beta_n\max(0,y_n-t)\,.
\ee
This is equivalent to the statement that, for any concave function $h:\R\to\R$,
\be
\sum_{m=1}^M\alpha_mh(x_m)\ge\sum_{n=1}^N\beta_nh(y_n)\,.
\ee
It is also equivalent to the existence of an $M\times N$ matrix $\mathsf{D}$ with non-negative entries such that
\be
\vec 1_M=\mathsf{D}\vec 1_N\,,\qquad
\vec\alpha\mathsf{D}=\vec\beta\,,\qquad
\vec x=\mathsf{D}\vec y\,,
\ee
where $\vec 1_M$ is the length-$M$ vector with all components equal to 1 \cite{sherman1951theorem,borcea2007equilibrium}.

\section{Example code}\label{sec:code}

Here we expand briefly on how the conjectures were tested, specifically how to implement both the analytic and numerical majorization tests in \emph{Mathematica}.  

The first step is to generate the necessary data, i.e.\ all the null reductions of a given set of inequalities that we want to test. We will not show this here. However, our starting point was to upload in Mathematica all the available sHIQs in the form of entropy vectors, which are publicly available here \cite{hecdata}. We will present the code by using the following $\mathsf{N} = 5$ primitive sHIQ as a concrete example:
\begin{multline}
    \mathbf{Q}^{[5]} = \ent{ABC} + \ent{ABD}  + \ent{ACE}  + \ent{BCD} + \ent{BCE}\\
    - \ent{A} - \ent{BC} - \ent{BD} - \ent{CE} - \ent{ABCD} - \ent{ABCE}\,.
\end{multline}
This sHIQ will have five null reductions, and for each, there will be a pair of LHS and RHS vectors to be tested for majorization. So, we store the necessary information in the following list:
\begin{mmaCell}{Code}
  Q5 = {{{a+b+c, a+b+d, a+c+e},{a,a+b+c+d, a+b+c+e}},
{{a+b+c, a+b+d, b+c+d, b+c+e},{b+c, b+d, a+b+c+d, a+b+c+e}},
{{a+b+c, a+c+e, b+c+d, b+c+e},{b+c, c+e, a+b+c+d, a+b+c+e}},
{{a+b+d, b+c+d},{b+d, a+b+c+d}}, {{a+c+e, b+c+e},{c+e, a+b+c+e}}};
\end{mmaCell}
The list above is storing all five pairs of vectors in the form \{LHS vector, RHS vector\}.

Performing the majorization test algebraically is then easy using \eqref{majmax}, by writing the following function \texttt{majorizationTest} which takes the list Q5 above as input:
\begin{mmaCell}{Code}
majorizationTest[\mmaPat{majVecs_}] := Module[{\mmaPat{testTable}},
  \mmaPat{testTable} = Table[
    FullSimplify[
     Sum[Max[\mmaPat{majVecs}[[X, 1, \mmaFnc{i}]]-t, 0], {\mmaFnc{i}, 1, Length[\mmaPat{majVecs}[[X, 1]]]}] <=
     Sum[Max[\mmaPat{majVecs}[[X, 2, \mmaFnc{i}]]-t, 0], {\mmaFnc{i}, 1, Length[\mmaPat{majVecs}[[X, 2]]]}], 
     Assumptions -> {a>0, b>0, c>0, d>0, e>0, t \[Element] Reals}], {X, 1, 5}];
  AllTrue[\mmaPat{testTable}, TrueQ]
  ]
\end{mmaCell}
The function returns \texttt{True} if the inequality is LCC-safe, i.e.\ if every null reduction passes the majorization test. (The above code is suitable for testing an inequality involving five regions. To extend it to more regions, simply introduce additional variables in \texttt{Assumptions} and increase the range of the variable X from 5 to the number of parties involved.)

If one instead wants to perform a numerical test, the following two functions suffice:
\begin{mmaCell}{Code}
Majorized[\mmaPat{v_List}] := 
 \mmaDef{VectorLessEqual}[\mmaDef{Accumulate} /@ Reverse /@ \mmaDef{NumericalSort} /@ \mmaPat{v}] &&
 (Equal @@ Total /@ \mmaPat{v}) /; \mmaDef{AllTrue}[\mmaPat{v}, NumericQ, 2] 
\end{mmaCell}
This takes as input a list of two numerical vectors and checks whether the first vector is majorized by the second vector.
\begin{mmaCell}{Code}
majorizationTestRandom[\mmaPat{majVecs_}, \mmaPat{n_}] := Module[{\mmaPat{testTable}},
  \mmaPat{testTable} = Table[
  Table[
     Majorized[\mmaPat{majVecs}[[\mmaFnc{X}]]] /. {a -> \mmaDef{RandomReal}[{0, 10}], 
       b -> \mmaDef{RandomReal}[{0, 10}], c -> \mmaDef{RandomReal}[{0, 10}],
       d -> \mmaDef{RandomReal}[{0, 10}], e -> \mmaDef{RandomReal}[{0, 10}], \mmaPat{n}], {\mmaFnc{X}, 1, 5}];
  \mmaDef{AllTrue}[Flatten[\mmaPat{testTable}], TrueQ]
  ]
\end{mmaCell}
This takes two inputs, the list of null reductions and a positive integer $n$ which determines the total number of independent random draws of the variables $a,b,c,\dots$ to be performed for each null reduction. The function returns \texttt{True} if no counterexamples are found. This method cannot prove LCC-safety, but if the test is passed for large enough $n$ (such as $\sim100$) the inequality is established as LCC-safe beyond any reasonable doubt. This function has the advantage of being extremely fast: one can test all known (non-MI) 1876 $\mathsf{N} = 6$ primitive sHIQs orbits with $n = 100$ on a laptop in less than a minute:
\begin{mmaCell}{Code}
\mmaDef{testAllHIQs} = Table[\mmaDef{majorizationTestRandom}[\mmaDef{HIQ}[[\mmaFnc{k}]], 100], {\mmaFnc{k}, 2, 1877}];
\mmaDef{AbsoluteTiming}[\mmaDef{AllTrue}[\mmaDef{testAllHIQs}, TrueQ]]
\end{mmaCell}
\begin{mmaCell}{Output}
\{28.2506, True\}
\end{mmaCell}

Finally, to check whether an information quantity is an sHIQ (needed to test conjectures \ref{conj:QisTrue} and \ref{conj:allNR}) one can use the following simple function \texttt{InConeQ}:
\begin{mmaCell}{Code}
InConeQ[\mmaPat{cone_List}, \mmaPat{vec_List}] := Module[{\mmaPat{cc}, \mmaPat{sol}},
  \mmaPat{cc} = Array[c, Length[\mmaPat{cone}]];  
  \mmaPat{sol} = FindInstance[Join[Thread[\mmaFnc{cc}.\mmaPat{cone} == \mmaPat{vec} && Thread[\mmaFnc{cc} >= 0]]],\mmaFnc{cc}];
  If[\mmaPat{sol} == {}, False, True]
  ]
\end{mmaCell}
The function takes as input (i) a list, \texttt{cone}, which contains all extreme rays of the dual cone (for $\mathsf{N} = 5$, as in sec. \ref{sec:results}, these are the 372 superbalanced sHIQs), and (ii) the vector \texttt{vec}, corresponding to the information quantity that we want to test. The function then checks whether \texttt{vec} can be written as a conical combination of the basis vectors in \texttt{cone}. If a solution exists, the information quantity is confirmed to be a sHIQ, and the function returns \texttt{True}.

\bibliographystyle{jhep}
\bibliography{references}

@book {MR2759813,
    AUTHOR = {Marshall, Albert W. and Olkin, Ingram and Arnold, Barry C.},
     TITLE = {Inequalities: theory of majorization and its applications},
    SERIES = {Springer Series in Statistics},
   EDITION = {Second},
 PUBLISHER = {Springer, New York},
      YEAR = {2011},
     PAGES = {xxviii+909},
      ISBN = {978-0-387-40087-7},
   MRCLASS = {26-02 (05-02 26D15 26D20 60E15)},
  MRNUMBER = {2759813},
       DOI = {10.1007/978-0-387-68276-1},
       URL = {https://doi.org/10.1007/978-0-387-68276-1},
}

@article{borcea2007equilibrium,
  title={Equilibrium points of logarithmic potentials induced by positive charge distributions. I. Generalized de Bruijn-Springer relations},
  author={Borcea, Julius},
  journal={Transactions of the American Mathematical Society},
  volume={359},
  number={7},
  pages={3209--3237},
  year={2007}
}

@article{sherman1951theorem,
  title={On a theorem of hardy, littlewood, polya, and blackwell},
  author={Sherman, Seymour},
  journal={Proceedings of the National Academy of Sciences},
  volume={37},
  number={12},
  pages={826--831},
  year={1951}
}

@article{Bao:2025sjn,
    author = "Bao, Ning and Furuya, Keiichiro and Naskar, Joydeep",
    title = "{On the completeness of contraction map proof method for holographic entropy inequalities}",
    eprint = "2506.18086",
    archivePrefix = "arXiv",
    primaryClass = "hep-th",
    month = "6",
    year = "2025"
}

@article{Casini:2017roe,
    author = "Casini, Horacio and Teste, Eduardo and Torroba, Gonzalo",
    title = "{Modular Hamiltonians on the null plane and the Markov property of the vacuum state}",
    eprint = "1703.10656",
    archivePrefix = "arXiv",
    primaryClass = "hep-th",
    doi = "10.1088/1751-8121/aa7eaa",
    journal = "J. Phys. A",
    volume = "50",
    number = "36",
    pages = "364001",
    year = "2017"
}

@article{Wall_2014,
	doi = {10.1088/0264-9381/31/22/225007},
  
	url = {https://doi.org/10.1088%2F0264-9381%2F31%2F22%2F225007},
  
	year = 2014,
	month = {nov},
  
	publisher = {{IOP} Publishing},
  
	volume = {31},
  
	number = {22},
  
	pages = {225007},
  
	author = {Aron C Wall},
  
	title = {Maximin surfaces, and the strong subadditivity of the covariant holographic entanglement entropy},
  
	journal = {Classical and Quantum Gravity}
}

@article{Czech:2019lps,
    author = "Czech, Bartlomiej and Dong, Xi",
    title = "{Holographic Entropy Cone with Time Dependence in Two Dimensions}",
    eprint = "1905.03787",
    archivePrefix = "arXiv",
    primaryClass = "hep-th",
    doi = "10.1007/JHEP10(2019)177",
    journal = "JHEP",
    volume = "10",
    pages = "177",
    year = "2019"
}

@article{Ryu:06b20v,
    author = "Ryu, Shinsei and Takayanagi, Tadashi",
    title = "{Holographic derivation of entanglement entropy from AdS/CFT}",
    eprint = "hep-th/0603001",
    archivePrefix = "arXiv",
    reportNumber = "NSF-KITP-06-11",
    doi = "10.1103/PhysRevLett.96.181602",
    journal = "Phys. Rev. Lett.",
    volume = "96",
    pages = "181602",
    year = "2006"
}

@article{Ryu:2006ef,
    author = "Ryu, Shinsei and Takayanagi, Tadashi",
    title = "{Aspects of Holographic Entanglement Entropy}",
    eprint = "hep-th/0605073",
    archivePrefix = "arXiv",
    reportNumber = "NSF-KITP-06-31, KUNS-2021",
    doi = "10.1088/1126-6708/2006/08/045",
    journal = "JHEP",
    volume = "08",
    pages = "045",
    year = "2006"
}

@article{Hubeny:2007xt,
    author = "Hubeny, Veronika E. and Rangamani, Mukund and Takayanagi, Tadashi",
    title = "{A Covariant holographic entanglement entropy proposal}",
    eprint = "0705.0016",
    archivePrefix = "arXiv",
    primaryClass = "hep-th",
    reportNumber = "DCPT-07-13, KUNS-2069",
    doi = "10.1088/1126-6708/2007/07/062",
    journal = "JHEP",
    volume = "07",
    pages = "062",
    year = "2007"
}

@article{Headrick:2007km,
    author = "Headrick, Matthew and Takayanagi, Tadashi",
    title = "{A Holographic proof of the strong subadditivity of entanglement entropy}",
    eprint = "0704.3719",
    archivePrefix = "arXiv",
    primaryClass = "hep-th",
    reportNumber = "SU-ITP-07-08, KUNS-2069",
    doi = "10.1103/PhysRevD.76.106013",
    journal = "Phys. Rev. D",
    volume = "76",
    pages = "106013",
    year = "2007"
}

@article{Bao:2015bfa,
    author = "Bao, Ning and Nezami, Sepehr and Ooguri, Hirosi and Stoica, Bogdan and Sully, James and Walter, Michael",
    title = "{The Holographic Entropy Cone}",
    eprint = "1505.07839",
    archivePrefix = "arXiv",
    primaryClass = "hep-th",
    reportNumber = "CALT-TH-2015-020, IPMU15-0074, SLAC-PUB-16294, SU-ITP-15-08",
    doi = "10.1007/JHEP09(2015)130",
    journal = "JHEP",
    volume = "09",
    pages = "130",
    year = "2015"
}

@article{Hayden:2011ag,
    author = "Hayden, Patrick and Headrick, Matthew and Maloney, Alexander",
    title = "{Holographic Mutual Information is Monogamous}",
    eprint = "1107.2940",
    archivePrefix = "arXiv",
    primaryClass = "hep-th",
    reportNumber = "BRX-TH-638",
    doi = "10.1103/PhysRevD.87.046003",
    journal = "Phys. Rev. D",
    volume = "87",
    number = "4",
    pages = "046003",
    year = "2013"
}

@article{Bousso:2024ysg,
    author = "Bousso, Raphael and Kaya, Sami",
    title = "{Geometric quantum states beyond the AdS/CFT correspondence}",
    eprint = "2404.11644",
    archivePrefix = "arXiv",
    primaryClass = "hep-th",
    doi = "10.1103/PhysRevD.110.066017",
    journal = "Phys. Rev. D",
    volume = "110",
    number = "6",
    pages = "066017",
    year = "2024"
}

@article{Akers:2021lms,
    author = "Akers, Chris and Hern\'andez-Cuenca, Sergio and Rath, Pratik",
    title = "{Quantum Extremal Surfaces and the Holographic Entropy Cone}",
    eprint = "2108.07280",
    archivePrefix = "arXiv",
    primaryClass = "hep-th",
    doi = "10.1007/JHEP11(2021)177",
    journal = "JHEP",
    volume = "11",
    pages = "177",
    year = "2021"
}

@article{Hernandez-Cuenca:2022pst,
    author = "Hern\'andez-Cuenca, Sergio and Hubeny, Veronika E. and Rota, Massimiliano",
    title = "{The holographic entropy cone from marginal independence}",
    eprint = "2204.00075",
    archivePrefix = "arXiv",
    primaryClass = "hep-th",
    doi = "10.1007/JHEP09(2022)190",
    journal = "JHEP",
    volume = "09",
    pages = "190",
    year = "2022"
}

@article{Hubeny:2018ijt,
    author = "Hubeny, Veronika E. and Rangamani, Mukund and Rota, Massimiliano",
    title = "{The holographic entropy arrangement}",
    eprint = "1812.08133",
    archivePrefix = "arXiv",
    primaryClass = "hep-th",
    doi = "10.1002/prop.201900011",
    journal = "Fortsch. Phys.",
    volume = "67",
    number = "4",
    pages = "1900011",
    year = "2019"
}

@article{Hubeny:2018trv,
    author = "Hubeny, Veronika E. and Rangamani, Mukund and Rota, Massimiliano",
    title = "{Holographic entropy relations}",
    eprint = "1808.07871",
    archivePrefix = "arXiv",
    primaryClass = "hep-th",
    doi = "10.1002/prop.201800067",
    journal = "Fortsch. Phys.",
    volume = "66",
    number = "11-12",
    pages = "1800067",
    year = "2018"
}

@article{He:2019ttu,
    author = "He, Temple and Headrick, Matthew and Hubeny, Veronika E.",
    title = "{Holographic Entropy Relations Repackaged}",
    eprint = "1905.06985",
    archivePrefix = "arXiv",
    primaryClass = "hep-th",
    doi = "10.1007/JHEP10(2019)118",
    journal = "JHEP",
    volume = "10",
    pages = "118",
    year = "2019"
}

@article{Marolf:2017shp,
    author = "Marolf, Donald and Rota, Massimiliano and Wien, Jason",
    title = "{Handlebody phases and the polyhedrality of the holographic entropy cone}",
    eprint = "1705.10736",
    archivePrefix = "arXiv",
    primaryClass = "hep-th",
    doi = "10.1007/JHEP10(2017)069",
    journal = "JHEP",
    volume = "10",
    pages = "069",
    year = "2017"
}

@article{Bao:2018wwd,
    author = "Bao, Ning and Mezei, M\'ark",
    title = "{On the Entropy Cone for Large Regions at Late Times}",
    eprint = "1811.00019",
    archivePrefix = "arXiv",
    primaryClass = "hep-th",
    month = "10",
    year = "2018"
}

@article{Erdmenger:2017gdk,
    author = "Erdmenger, Johanna and Fernandez, Daniel and Flory, Mario and Megias, Eugenio and Straub, Ann-Kathrin and Witkowski, Piotr",
    title = "{Time evolution of entanglement for holographic steady state formation}",
    eprint = "1705.04696",
    archivePrefix = "arXiv",
    primaryClass = "hep-th",
    reportNumber = "MPP-2017-87",
    doi = "10.1007/JHEP10(2017)034",
    journal = "JHEP",
    volume = "10",
    pages = "034",
    year = "2017"
}

@article{Grado-White:2024gtx,
    author = "Grado-White, Brianna and Grimaldi, Guglielmo and Headrick, Matthew and Hubeny, Veronika E.",
    title = "{Testing holographic entropy inequalities in 2 + 1 dimensions}",
    eprint = "2407.07165",
    archivePrefix = "arXiv",
    primaryClass = "hep-th",
    reportNumber = "BRX-TH-6721",
    doi = "10.1007/JHEP01(2025)065",
    journal = "JHEP",
    volume = "01",
    pages = "065",
    year = "2025"
}

@article{Bao:2024vmy,
    author = "Bao, Ning and Furuya, Keiichiro and Naskar, Joydeep",
    title = "{A framework for generalizing toric inequalities for holographic entanglement entropy}",
    eprint = "2408.04741",
    archivePrefix = "arXiv",
    primaryClass = "hep-th",
    doi = "10.1007/JHEP10(2024)251",
    journal = "JHEP",
    volume = "10",
    pages = "251",
    year = "2024"
}

@article{Bao:2024azn,
    author = "Bao, Ning and Furuya, Keiichiro and Naskar, Joydeep",
    title = "{Towards a complete classification of holographic entropy inequalities}",
    eprint = "2409.17317",
    archivePrefix = "arXiv",
    primaryClass = "hep-th",
    month = "9",
    year = "2024"
}

@article{Czech:2025tds,
    author = "Czech, Bartlomiej and Shuai, Sirui",
    title = "{Nesting is not Contracting}",
    eprint = "2501.17222",
    archivePrefix = "arXiv",
    primaryClass = "hep-th",
    month = "1",
    year = "2025"
}

@article{Grado-White:2025jci,
    author = "Grado-White, Brianna and Grimaldi, Guglielmo and Headrick, Matthew and Hubeny, Veronika E.",
    title = "{Minimax surfaces and the holographic entropy cone}",
    eprint = "2502.09894",
    archivePrefix = "arXiv",
    primaryClass = "hep-th",
    doi = "10.1007/JHEP05(2025)104",
    journal = "JHEP",
    volume = "05",
    pages = "104",
    year = "2025"
}

@article{Callan:2012ip,
    author = "Callan, Robert and He, Jian-Yang and Headrick, Matthew",
    title = "{Strong subadditivity and the covariant holographic entanglement entropy formula}",
    eprint = "1204.2309",
    archivePrefix = "arXiv",
    primaryClass = "hep-th",
    reportNumber = "BRX-TH-645",
    doi = "10.1007/JHEP06(2012)081",
    journal = "JHEP",
    volume = "06",
    pages = "081",
    year = "2012"
}

@article{Caginalp:2019mgu,
    author = "Caginalp, Reginald J.",
    title = "{Holographic entropy cone in AdS-Vaidya spacetimes}",
    eprint = "1905.00544",
    archivePrefix = "arXiv",
    primaryClass = "hep-th",
    doi = "10.1103/PhysRevD.101.026010",
    journal = "Phys. Rev. D",
    volume = "101",
    number = "2",
    pages = "026010",
    year = "2020"
}

@article{Cui:2018dyq,
    author = "Cui, Shawn X. and Hayden, Patrick and He, Temple and Headrick, Matthew and Stoica, Bogdan and Walter, Michael",
    title = "{Bit Threads and Holographic Monogamy}",
    eprint = "1808.05234",
    archivePrefix = "arXiv",
    primaryClass = "hep-th",
    reportNumber = "BRX-6330, Brown-HET-1764, MIT-CTP-5036",
    doi = "10.1007/s00220-019-03510-8",
    journal = "Commun. Math. Phys.",
    volume = "376",
    number = "1",
    pages = "609--648",
    year = "2019"
}

@article{Hernandez-Cuenca:2023iqh,
    author = "Hern\'andez-Cuenca, Sergio and Hubeny, Veronika E. and Jia, Hewei Frederic",
    title = "{Holographic entropy inequalities and multipartite entanglement}",
    eprint = "2309.06296",
    archivePrefix = "arXiv",
    primaryClass = "hep-th",
    reportNumber = "MIT-CTP/5610",
    doi = "10.1007/JHEP08(2024)238",
    journal = "JHEP",
    volume = "08",
    pages = "238",
    year = "2024"
}

@article{HernandezCuenca:2019wgh,
    author = "Hern\'andez Cuenca, Sergio",
    title = "{Holographic entropy cone for five regions}",
    eprint = "1903.09148",
    archivePrefix = "arXiv",
    primaryClass = "hep-th",
    doi = "10.1103/PhysRevD.100.026004",
    journal = "Phys. Rev. D",
    volume = "100",
    number = "2",
    pages = "026004",
    year = "2019"
}

@article{Czech:2022fzb,
    author = "Czech, Bartlomiej and Wang, Yunfei",
    title = "{A holographic inequality for N = 7 regions}",
    eprint = "2209.10547",
    archivePrefix = "arXiv",
    primaryClass = "hep-th",
    doi = "10.1007/JHEP01(2023)101",
    journal = "JHEP",
    volume = "01",
    pages = "101",
    year = "2023"
}

@article{Czech:2023xed,
    author = "Czech, Bartlomiej and Shuai, Sirui and Wang, Yixu and Zhang, Daiming",
    title = "{Holographic entropy inequalities and the topology of entanglement wedge nesting}",
    eprint = "2309.15145",
    archivePrefix = "arXiv",
    primaryClass = "hep-th",
    doi = "10.1103/PhysRevD.109.L101903",
    journal = "Phys. Rev. D",
    volume = "109",
    number = "10",
    pages = "L101903",
    year = "2024"
}

@article{Hernandez-Cuenca:2019jpv,
    author = "Hern\'andez-Cuenca, Sergio and Hubeny, Veronika E. and Rangamani, Mukund and Rota, Massimiliano",
    title = "{The quantum marginal independence problem}",
    eprint = "1912.01041",
    archivePrefix = "arXiv",
    primaryClass = "quant-ph",
    month = "12",
    year = "2019"
}

@article{He:2020xuo,
	archiveprefix = {arXiv},
	author = {He, Temple and Hubeny, Veronika E. and Rangamani, Mukund},
	doi = {10.1007/JHEP07(2020)245},
	eprint = {2002.04558},
	journal = {JHEP},
	pages = {245},
	primaryclass = {hep-th},
	title = {{Superbalance of Holographic Entropy Inequalities}},
	volume = {07},
	year = {2020},
	bdsk-url-1 = {https://doi.org/10.1007/JHEP07(2020)245}}

@article{Bao:2020zgx,
	archiveprefix = {arXiv},
	author = {Bao, Ning and Cheng, Newton and Hern\'andez-Cuenca, Sergio and Su, Vincent P.},
	doi = {10.21468/SciPostPhys.9.5.067},
	eprint = {2002.05317},
	journal = {SciPost Phys.},
	number = {5},
	pages = {5},
	primaryclass = {quant-ph},
	title = {{The Quantum Entropy Cone of Hypergraphs}},
	volume = {9},
	year = {2020},
	bdsk-url-1 = {https://doi.org/10.21468/SciPostPhys.9.5.067}}

@article{Avis:2021xnz,
	archiveprefix = {arXiv},
	author = {Avis, David and Hern\'andez-Cuenca, Sergio},
	doi = {10.1016/j.dam.2022.11.016},
	eprint = {2102.07535},
	journal = {Discrete Appl. Math.},
	pages = {16--39},
	primaryclass = {math.CO},
	title = {{On the foundations and extremal structure of the holographic entropy cone}},
	volume = {328},
	year = {2023},
	bdsk-url-1 = {https://doi.org/10.1016/j.dam.2022.11.016}}

@article{He:2023cco,
	archiveprefix = {arXiv},
	author = {He, Temple and Hubeny, Veronika E. and Rota, Massimiliano},
	doi = {10.1103/PhysRevD.109.L041901},
	eprint = {2307.10137},
	journal = {Phys. Rev. D},
	number = {4},
	pages = {L041901},
	primaryclass = {hep-th},
	reportnumber = {CALT-TH 2023-027},
	title = {{Gap between holographic and quantum mechanical extreme rays of the subadditivity cone}},
	volume = {109},
	year = {2024},
	bdsk-url-1 = {https://doi.org/10.1103/PhysRevD.109.L041901}}

@article{He:2022bmi,
	archiveprefix = {arXiv},
	author = {He, Temple and Hubeny, Veronika E. and Rota, Massimiliano},
	doi = {10.1007/JHEP08(2023)018},
	eprint = {2211.11858},
	journal = {JHEP},
	pages = {018},
	primaryclass = {hep-th},
	title = {{On the relation between the subadditivity cone and the quantum entropy cone}},
	volume = {08},
	year = {2023},
	bdsk-url-1 = {https://doi.org/10.1007/JHEP08(2023)018}}

@article{Czech:2021rxe,
	archiveprefix = {arXiv},
	author = {Czech, Bartlomiej and Shuai, Sirui},
	doi = {10.1038/s42005-022-01019-6},
	eprint = {2112.00763},
	journal = {Commun. Phys.},
	pages = {244},
	primaryclass = {hep-th},
	title = {{Holographic Cone of Average Entropies}},
	volume = {5},
	year = {2022},
	bdsk-url-1 = {https://doi.org/10.1038/s42005-022-01019-6}}

@article{Fadel:2021urx,
	archiveprefix = {arXiv},
	author = {Fadel, Matteo and Hern\'andez-Cuenca, Sergio},
	doi = {10.1103/PhysRevD.105.086008},
	eprint = {2112.03862},
	journal = {Phys. Rev. D},
	number = {8},
	pages = {086008},
	primaryclass = {quant-ph},
	title = {{Symmetrized holographic entropy cone}},
	volume = {105},
	year = {2022},
	bdsk-url-1 = {https://doi.org/10.1103/PhysRevD.105.086008}}

@article{He:2023rox,
    author = "He, Temple and Hern\'andez-Cuenca, Sergio and Keeler, Cynthia",
    title = "{Beyond the Holographic Entropy Cone via Cycle Flows}",
    eprint = "2312.10137",
    archivePrefix = "arXiv",
    primaryClass = "hep-th",
    reportNumber = "CALT-TH 2023-052, MIT-CTP/5659",
    doi = "10.1007/s00220-024-05120-5",
    journal = "Commun. Math. Phys.",
    volume = "405",
    number = "11",
    pages = "252",
    year = "2024"
}

@article{Czech:2024rco,
    author = "Czech, Bartlomiej and Liu, Yu and Yu, Bo",
    title = "{Two infinite families of facets of the holographic entropy cone}",
    eprint = "2401.13029",
    archivePrefix = "arXiv",
    primaryClass = "hep-th",
    doi = "10.21468/SciPostPhys.17.3.084",
    journal = "SciPost Phys.",
    volume = "17",
    number = "3",
    pages = "084",
    year = "2024"
}

@article{Bao:2024obe,
    author = "Bao, Ning and Naskar, Joydeep",
    title = "{Properties of the contraction map for holographic entanglement entropy inequalities}",
    eprint = "2403.13283",
    archivePrefix = "arXiv",
    primaryClass = "hep-th",
    doi = "10.1007/JHEP06(2024)039",
    journal = "JHEP",
    volume = "06",
    pages = "039",
    year = "2024"
}

@article{Li:2022jji,
	archiveprefix = {arXiv},
	author = {Li, Nan and Dong, Chuan-Shi and Du, Dong-Hui and Shu, Fu-Wen},
	doi = {10.1007/JHEP06(2022)153},
	eprint = {2204.03192},
	journal = {JHEP},
	pages = {153},
	primaryclass = {hep-th},
	title = {{Improved proof-by-contraction method and relative homologous entropy inequalities}},
	volume = {06},
	year = {2022},
	bdsk-url-1 = {https://doi.org/10.1007/JHEP06(2022)153}}

@article{Bao:2021gzu,
    archiveprefix = {arXiv},
    author = {Bao, Ning and Cheng, Newton and Hern\'andez-Cuenca, Sergio and Su, Vincent Paul},
    date-added = {2024-05-10 14:29:09 -0700},
    date-modified = {2024-05-10 14:29:09 -0700},
    doi = {10.22331/q-2022-06-20-741},
    eprint = {2109.01150},
    journal = {Quantum},
    pages = {741},
    primaryclass = {quant-ph},
    title = {{Topological Link Models of Multipartite Entanglement}},
    volume = {6},
    year = {2022},
    bdsk-url-1 = {https://doi.org/10.22331/q-2022-06-20-741}}

@article{Walter:2020zvt,
    archiveprefix = {arXiv},
    author = {Walter, Michael and Witteveen, Freek},
    date-added = {2024-01-03 09:04:45 -0800},
    date-modified = {2024-01-03 09:04:45 -0800},
    doi = {10.1063/5.0043993},
    eprint = {2002.12397},
    journal = {J. Math. Phys.},
    number = {9},
    pages = {092203},
    primaryclass = {quant-ph},
    title = {{Hypergraph min-cuts from quantum entropies}},
    volume = {62},
    year = {2021},
    bdsk-url-1 = {https://doi.org/10.1063/5.0043993}}

@article{He:2023aif,
    author = "He, Temple and Hubeny, Veronika E. and Rota, Massimiliano",
    title = "{Inner bounding the quantum entropy cone with subadditivity and subsystem coarse grainings}",
    eprint = "2312.04074",
    archivePrefix = "arXiv",
    primaryClass = "quant-ph",
    reportNumber = "CALT-TH 2023-050",
    doi = "10.1103/PhysRevA.109.052407",
    journal = "Phys. Rev. A",
    volume = "109",
    number = "5",
    pages = "052407",
    year = "2024"
}

@article{Bao:2020mqq,
    archiveprefix = {arXiv},
    author = {Bao, Ning and Cheng, Newton and Hern\'andez-Cuenca, Sergio and Su, Vincent Paul},
    date-added = {2024-05-10 14:36:15 -0700},
    date-modified = {2024-05-10 14:36:15 -0700},
    eprint = {2006.16292},
    month = {6},
    primaryclass = {quant-ph},
    title = {{A Gap Between the Hypergraph and Stabilizer Entropy Cones}},
    year = {2020}}

@article{Czech:2025jnw,
    author = "Czech, Bartlomiej and Shuai, Sirui and Wang, Yixu",
    title = "{Entropy Inequalities Constrain Holographic Erasure Correction}",
    eprint = "2502.12246",
    archivePrefix = "arXiv",
    primaryClass = "hep-th",
    month = "2",
    year = "2025"
}

@article{HubenyLiuWIP,
    author = "Hubeny, Veronika E. and Yu, Liu",
    title = "Tripartite Form Universality in Holographic Entropy Inequalities",
    note = "in preparation"}

@article{hecdata,
	author = "Hern\'andez-Cuenca, Sergio",
	title = {{Holographic Entropy Cone Database}},
	year = "2024",
	journal  = "\href{https://github.com/SergioHC95/Holographic-Entropy-Cone}{GitHub Repository}"
}

@article{He:2024xzq,
	archiveprefix = {arXiv},
	author = {He, Temple and Hubeny, Veronika E. and Rota, Massimiliano},
	date-added = {2025-08-28 14:27:50 -0600},
	date-modified = {2025-08-28 14:27:50 -0600},
	doi = {10.1007/JHEP06(2025)055},
	eprint = {2412.15364},
	journal = {JHEP},
	pages = {055},
	primaryclass = {quant-ph},
	reportnumber = {CALT-TH 2024-049},
	title = {{Algorithmic construction of SSA-compatible extreme rays of the subadditivity cone and the N = 6 solution}},
	volume = {06},
	year = {2025},
	bdsk-url-1 = {https://doi.org/10.1007/JHEP06(2025)055}}

@article{Hubeny:2024fjn,
	archiveprefix = {arXiv},
	author = {Hubeny, Veronika E. and Rota, Massimiliano},
	date-added = {2025-08-28 14:27:22 -0600},
	date-modified = {2025-08-28 14:27:22 -0600},
    doi = {https://doi.org/10.1007/JHEP09(2025)080},
	eprint = {2412.18018},
    journal = {JHEP},
    pages = {080},
	primaryclass = {hep-th},
	title = {{Correlation hypergraph: a new representation of a quantum marginal independence pattern}},
    volume = {09},
	year = {2025}}

@article{Grimaldi:2026lbq,
    author = "Grimaldi, Guglielmo and Headrick, Matthew and Hubeny, Veronika E. and Shteyner, Pavel",
    title = "{Combinatorial properties of holographic entropy inequalities}",
    eprint = "2601.09987",
    archivePrefix = "arXiv",
    primaryClass = "hep-th",
    month = "1",
    year = "2026"
}

@article{Czech:2026zca,
    author = "Czech, Bartlomiej and Feng, Yichen and Wu, Xianlai and Xie, Minjun",
    title = "{Holographic entropy inequalities pass the majorization test}",
    eprint = "2601.09989",
    archivePrefix = "arXiv",
    primaryClass = "hep-th",
    month = "1",
    year = "2026"
}
\end{document}